\newcommand{\rr}{{\mathbb{R}}}
\newcommand{\wang}[1]{\ifthenelse{\boolean{showcomments}}
	{ \textcolor{red}{(ZW:  #1)}}{}}
\newcommand{\fliu}[1]{\ifthenelse{\boolean{showcomments}}
	{ \textcolor{red}{(FL:  #1)}}{}}
\newcommand{\peng}[1]{\ifthenelse{\boolean{showcomments}}
	{ \textcolor{red}{(PY:  #1)}}{}}
\theoremstyle{definition}
\newtheorem{theorem}{Theorem}
\newtheorem{lemma}[theorem]{Lemma}
\newtheorem{proposition}{Proposition}
\theoremstyle{definition}
\newtheorem{definition}{Definition}
\newtheorem{remark}{Remark}
\newtheorem{example}{Example}
\newtheorem{assumption}{Assumption}
\newtheorem{problem}{Problem}
\begin{document}
	\setstretch{1}	
	
	\title{Augmented Synchronization of Power Systems}
	
	\author{Peng~Yang, ~Feng~Liu, \IEEEmembership{Senior~Member,~IEEE,}	
		~Tao Liu, \IEEEmembership{Member,~IEEE,}
		~David J. Hill, \IEEEmembership{Life~Fellow,~IEEE}	
		
	}
	
	\maketitle
	
	\begin{abstract}
		Power system transient stability has been translated into a Lyapunov stability problem of the post-disturbance equilibrium for decades. Despite substantial results, conventional theories suffer from the stringent requirement of knowing the post-disturbance equilibrium a priori. In contrast, the wisdom from practice, which certificates stability by only the observation of converging frequencies and voltages, seems to provide an equilibrium-independent approach. Here, we formulate the empirical wisdom by the concept of augmented synchronization and aim to bridge such a theory-practice gap. First, we derive conditions under which the convergence to augmented synchronization implies the convergence to the equilibrium set, laying the first theoretical foundation for the empirical wisdom. Then, we reveal from what initial values the power system can achieve augmented synchronization. Our results open the possibility of an equilibrium-independent power system stability analytic that re-defines the nominal motion as augmented synchronization rather than certain equilibrium. Single-machine examples and the IEEE 9-bus system well verify our results and illustrate promising implications.
	\end{abstract}
	
	\begin{IEEEkeywords}
		Power system transient stability; augmented synchronization; AS-detectability; region of attraction.
	\end{IEEEkeywords}

	\IEEEpeerreviewmaketitle

	\section{Introduction}
	\label{sec:1}
	\IEEEPARstart{T}{ransient} stability underlies functional operations of modern power grids, which usually span thousands of kilometers in open land and always suffer from various types of disturbances. It refers to the ability of a power system, for a given initial operating condition, to regain a (new) state of operating equilibrium after being subjected to a large disturbance such as short-circuit faults and sudden large load changes \cite{definition_stability}. Although this descriptive definition covers the essence of transient stability, its interpretation diverges for theorists and engineering practitioners. 
	
	Theoretically, transient stability has been translated into the equilibrium stability problem in the sense of Lyapunov for decades \cite{Kundur_Powersystemstability_1994}. Under this framework, a set of ordinary differential equations (ODEs) or differential algebraic equations (DAEs) are used to describe the post-disturbance dynamics of a power system\cite{definition_stability}. Theorists are interested in whether a post-disturbance equilibrium is Lyapunov asymptotically stable, and if so, what is the region of attraction (RoA), i.e., from which initial point the system solution can converge to this equilibrium. This idea has led to the so-called direct methods that are based on Lyapunov functions or energy functions \cite{pai2012energy,fouad1991power,chiang1987foundations,chiang2011direct}, and can directly assess transient stability without time-consuming simulations. Despite substantial results, such theories have been criticized for the stringent requirement on prior knowledge of the post-disturbance equilibrium. Such a requirement is unrealistic especially when the post-disturbance equilibrium depends on initial points. In some cases, system trajectories may converge to an equilibrium set, but none of the equilibria is asymptotically stable, and hence conventional direct methods fail. Indeed, taking a single stable equilibrium as the subject greatly restricts the capability of conventional transient stability theories. A rudimentary example will be shown in Section \ref{sec:case}, where these challenging issues can arise simply from interactions among subsystems.
	
	Fortunately, power system engineering practitioners have already found an intuitive and effective way to circumvent the aforementioned dilemma. Instead of an equilibrium-dependent stability concept, they often interpret and assess transient stability differently. After a large disturbance, if all frequencies synchronize to around the nominal value (50 or 60 Hz), and all voltages converge to steady-state values within a certain safety region, then the power system will be regarded to successfully regain an operating point, and hence is transiently stable. This practical criterion demands no information of the post-disturbance equilibrium but only the observation of converging frequencies and voltages. This feature is appealing in practice as it is impossible to monitor all state variables while the measurement of frequencies and voltages are often easy to obtain \cite{monticelli2000electric}. Although this empirical wisdom works well, one intriguing and important question remains: does the convergence of only frequencies and voltages guarantee the convergence of all states? After all, it is the latter, not the former, that the transient stability actually concerns.
	
	A mismatch between ``demand" and ``supply" of theories exists as well. Although \textit{equilibrium-independent} transient stability analytics has been advocated in the power system community for decades, which, to the best of our knowledge, dates back to the pioneering work by J. L. Willems in 1974 \cite{willems1974partial}, the progress seems stagnant. Despite several mathematical concepts beyond Lyapunov stability have been proposed, e.g., partial stability \cite{vorotnikov2005partial}, set stabilization \cite{shiriaev2000stabilization} and contraction analysis \cite{lohmiller1998contraction}, they rarely find proper applications in power systems (studies handling phase rotational symmetry are among  few exceptions\cite{willems1974partial,jouini2020steady,colombino2019global}).
	A more favorable concept is \emph{synchronization}, which has been drawing increasing attention recently \cite{colombino2019global,9163825,zhu2018stability,motter2013spontaneous,dorfler2013synchronization,dorfler2012synchronization,paganini2019global}. 
	It focuses on the synchronized state in stability analysis, instead of a given equilibrium. However, most existing works on this topic concern only synchrony among generators and are built on network-reduced ODEs models that assume constant voltages \cite{zhu2018stability,motter2013spontaneous,dorfler2013synchronization,dorfler2012synchronization}. Such models cannot capture the dynamical behavior of voltages, and fall short to capture heterogeneous devices in modern power grids. A more compatible and equilibrium-independent theory is still in need.
	
	A clear gap between theory and practice stands before us. On the one hand,  practical experience indicates that we can assess transient stability in an equilibrium-independent fashion but without knowing why. On the other hand, the equilibrium-dependent Lyapunov stability theory often encounters limitations in practice. Here, we aim to bridge this theory-practice gap by introducing the concept of \emph{augmented synchronization}, which means all frequencies synchronize and all voltages converge to steady states. 
	Inspired by the practice wisdom, we re-define the nominal motion of power systems as an augmented synchronous state instead of an equilibrium. And we re-interpret power system transient stability as convergence to augmented synchronization after a disturbance rather than to any specified post-disturbance equilibrium. Our interpretation conforms to the physical definition of power system transient stability and more importantly, will allow equilibrium-independent analytics.
	
	To this end, we aim to answer two questions in this paper. First, under what conditions does the convergence to augmented synchronization imply the convergence of all states? The answer will provide a theoretical justification for using augmented synchronization as a possible cornerstone in transient stability analysis, and also explain why the practice wisdom works. Second, given a system, from which initial points will the solution achieve augmented synchronization? Pursuing the answer is expected to stimulate a new equilibrium-independent approach to power system stability analysis without knowing the exact post-disturbance equilibrium \textit{a priori}. 
	Our main contributions are summarized as follows.
	\begin{itemize}
		\item We formulate the long-observed practice wisdom by introducing the concept of \emph{augmented synchronization}. Previous studies on power system synchronization\cite{9163825,zhu2018stability,motter2013spontaneous,dorfler2013synchronization,dorfler2012synchronization} usually adopt network-reduced ODE models and omit voltage dynamics. In contrast, our formulation is built on the structure-preserving model described by DAEs, accounting for both phase and voltage dynamics. Such formulation provides a framework to define and study synchronization among heterogeneous dynamical devices. Then, we propose sufficient conditions under which the convergence to augmented synchronization implies the convergence of all states to the equilibrium set, which we refer to as augmented synchronization detectability. This concept is weaker than the most widely-used ones such as  zero-state detectability \cite{100932} and output-to-state stability \cite{sontag1997output}. The major difference is that we do not expect to detect the states converging to any specific equilibrium. This result provides the first theoretical foundation for the long-observed practice wisdom that one may assess transient stability from only the observation of augmented synchronization.
		\item We establish theorems that estimate the region of attraction to augmented synchronization. That is, the region starting from which the solution of a power system will asymptotically converge to augmented synchronization. This result  extends the classical direct methods that estimate the region of attraction to a given post-disturbance equilibrium by the sublevel set of a Lyapunov function or an energy function \cite{chiang2011direct,hill1990stability,Chiang_Directstabilityanalysis_1995}. Here, the challenging difference is that we do not have prior knowledge of the post-disturbance equilibrium and hence our result concerns the convergence to a set rather than to any specific working point. Our result is also  different from the set stability theory \cite{fradkov2013nonlinear} because the set corresponding to augmented synchronization is often not an invariant set. This result opens opportunities for power system stability analysis by re-defining the ``nominal motion'' of a power system as augmented synchronization, rather than a single equilibrium as have been done for decades.
	\end{itemize}
	
	Combining the  two results above is expected to cultivate an \emph{equilibrium-independent} power stability analytic that requires no prior knowledge of the post-disturbance equilibrium. This could be a timely and more favorable approach, as power systems are evolving into more complex systems that operate in more volatile conditions, considering the burst of renewable generation integration. 
	
	The rest of the paper is organized as follows. Section II formulates the problem and introduces the concept of augmented synchronization. Section III addresses the first problem, giving conditions under which convergence to augmented synchronization implies convergence to the equilibrium set. Section IV addresses the second problem, providing methods to estimate the region of attraction to augmented synchronization without prior knowledge of the post-disturbance equilibrium. Section V illustrates the potential implications of our results using rudimentary examples. Section VI concludes the paper. 
	
	\emph{Notations}: $\rr$ ($\rr_{+}$) is the set of (positive) real numbers. Let $\text{col}(x_1,x_2)=(x_1^T,x_2^T)^T$ be a column vector in $\rr^{n+m}$ with $x_1\in\rr^n$ and $x_2\in\rr^m$. For a matrix $A\in\rr^{n\times n}$, $\mathrm{det}(A)$ denotes the determinant of $A$. For a symmetric  matrix $A\in\rr^{n\times n}$, $A\succ(\prec)0$ means $A$ is positive definite (resp. negative definite). $\mathds{1}_n\in\rr^n$ denotes the all-one vector.
	For a $v\in\rr^n$, $\|v\|$ denotes the 2-norm of $v$. And for a matrix $A\in\rr^{n\times m}$, $\|A\|$ denotes the induced 2-norm of $A$. When the context is clear, we may use 0 to denote a all-zero vector of a proper dimension.
	\section{Problem Formulation}\label{sec:2}
	\subsection{Power System DAEs Model}
	We consider the structure-preserving power system \cite{bergen1981structure} described by the following DAEs
	\begin{subequations}\label{eq:system}
		\begin{align}
		\dot{x}_1&=f_1(x_1,x_2,z) \label{eq:system1a}\\
		\dot{x}_2&=f_2(x_1,x_2,z) \label{eq:system1b}\\
		0&=g(x_2,z), \label{eq:system1c}
		\end{align}
	\end{subequations} 
	with compatible initial conditions $(x_{1_0},x_{2_0},z_0)$, i.e., $0=g(x_{2_0},z_0)$. Here
	\begin{equation*}
	\begin{aligned}
	f_1&:\mathbb{R}^{n_1}\times\mathbb{R}^{n_2}\times\mathbb{R}^{m}\to\mathbb{R}^{n_1}\\
	f_2&:\mathbb{R}^{n_1}\times\mathbb{R}^{n_2}\times\mathbb{R}^{m}\to\mathbb{R}^{n_2}\\
	g&:\mathbb{R}^{n_2}\times\mathbb{R}^{m}\to\rr^{m}.
	\end{aligned}
	\end{equation*} Here $x_1\in\mathbb{R}^{n_1}$ denotes the state variables of a power system that do not appear in the algebraic equations \eqref{eq:system1c}, $x_2\in\mathbb{R}^{n_2}$ denotes the other state variables. For the simplicity of notation, in the case when all state variables appear in $g$, we write $x_1=\emptyset$ and all the following results in this paper still apply. Collectively, let $x:=\mathrm{col}(x_1,x_2)\in\rr^{n}$ denote the vector of all state variables where $n:=n_1+n_2$. Here $z=\mathrm{col}(\theta,V)\in\rr^m$ denotes the algebraic variables of a power system that are the voltage phases $\theta$ and voltage magnitudes $V$ of power network buses. Let $f:=\mathrm{col}(f_1,f_2)$. For simplicity, sometimes we may shorten $(x_1,x_2,z)$ as $(x,z)$ and may write $f(x,z)$, $f_1(x,z)$, and $f_2(x,z)$ representing $f(x_1,x_2,z)$, $f_1(x_1,x_2,z)$, and $f_2(x_1,x_2,z)$, respectively. 
	Let $$\mathbf{G}:=\{(x,z)\in\rr^{n}\times\rr^m|g(x_2,z)=0\},$$ denote the set of all points satisfying \eqref{eq:system1c}. We use $\left( x(t;x_0,z_0),z(t;x_0,z_0)\right)$ to denote the solutions of \eqref{eq:system} as a function of $t$ and initial conditions. When the initial conditions are irrelevant and clear, we may shorten the notation as $\left( x(t),z(t)\right)$.
	
	We assume throughout that functions $f_1$, $f_2$, and $g$ are all twice continuously differentiable w.r.t. $x_1$, $x_2$, and $z$, which are generally satisfied in power systems. And we assume in some open connected set $\mathcal{D}\subset\rr^{n}\times\rr^m$, the following assumption holds.
	\begin{assumption}\label{as:1}
		For any $(x,z)\in\overline{\mathcal{D}}$, $\mathrm{det}(\frac{\partial g}{\partial z})\neq0$, where $\overline{\mathcal{D}}$ denotes the closure of $\mathcal{D}$.
	\end{assumption}
	Let $$\mathcal{D}_G:=\mathcal{D}\cap\mathbf{G}.$$

	\begin{remark}
		Assumption \ref{as:1} together with the twice continuously differential property of $f$ and $g$ guarantee the existence and uniqueness of the solution to \eqref{eq:system} given any initial value $(x_0,z_0)\in D_G$ \cite{hill1990stability}. In fact, Assumption \ref{as:1} brings the most simple situation of a DAEs system, i.e., index-1, where the algebraic equation can be solved, at least locally, for $z$ as a function of $x$ \cite{kunkel2006differential}. Here, the non-singularity is required on the closure of $\mathcal{D}$ to simplify technical details such that ensure limit points in $\mathcal{D}$ are also non-singular. In power systems, Assumption \ref{as:1} is violated in the so-called impasse surface defined by $\mathrm{det}(\frac{\partial g}{\partial z})=0$. Solutions approaching the impasse surface are typically associated with the short-term voltage collapse in power systems \cite{hiskens1989energy,venkatasubramanian1995local}, which is beyond the scope of this paper.
	\end{remark}
	
	Define the function
	\begin{equation}\label{eq:h}
	h(x,z):=-\left(\frac{\partial g}{\partial z}(x_2,z)\right)^{-1}\frac{\partial g}{\partial x_2}(x_2,z)f_2(x,z),
	\end{equation}
	then $h$ is a continuously differentiable function on $\cal D$ by Assumption \ref{as:1} and $f_2,g\in C^2$. Consider the following ODEs system:
	\begin{subequations}\label{eq:systemode}
		\begin{align}
		\dot{x}&=f(x,z) \\
		\dot{z}&=h(x,z). \label{eq:zdot}
		\end{align}
	\end{subequations}
	For every $(x_0,z_0)\in\mathcal{D}$, the ODEs system has a unique solution defined on some interval $[0,t_+)\subset\mathbb{R}$. Clearly, $g(x_2,z)$ is constant along such solution and hence can be considered as a first integral of \eqref{eq:systemode}. It was shown in \cite{hill1990stability} that any solution of \eqref{eq:systemode} with $(x_0,z_0)\in\mathcal{D}_G$ is equivalent to the solution of \eqref{eq:system} in $\mathcal{D}_G$ with the same $(x_0,z_0)$. Note, however, the maximal $t_+$ for a solution of \eqref{eq:systemode} may be finite as the corresponding solution of \eqref{eq:system} may leave $\mathcal{D}_G$. While for solutions of \eqref{eq:system} that always stay in $\mathcal{D}_G$, it is equivalent to the corresponding solutions of \eqref{eq:systemode} with $t_+=\infty$. These arguments indicate that the DAEs system \eqref{eq:system} can be imbedded in the ODEs system \eqref{eq:systemode} with compatible initial conditions \cite{hill1990stability}. This allows us to apply results established for ODEs systems. 
	
	Further, we assume the DAEs system \eqref{eq:system} possesses at least one equilibrium in $\mathcal{D}_G$, which obviously is also an equilibrium of the ODEs system \eqref{eq:systemode}. Define the equilibrium set $$\mathcal{E}:=\big\{(x,z)\in\mathcal{D}_G|f(x,z)=0\big\}.$$
	\begin{assumption}
		$\mathcal{E}$ is not empty.
	\end{assumption}
	\subsection{Augmented Synchronization}
	Traditionally, power system transient stability is analyzed in the framework of Lyapunov stability. It concerns the asymptotic convergence of all states to an equilibrium. Given an equilibrium $(x^*,z^*)\in\mathcal{D}_G$ of \eqref{eq:system}, one is interested in from which initial point $(x_0,z_0)\in\mathcal{D}_G$, $(x(t),z(t))\to (x^*,z^*)$ as $t\to\infty$, assuming the existence of the solution.
	
	However, practitioners often focus on the convergence to synchronous frequencies and steady voltages. It concerns neither the convergence of other states nor to which equilibrium they converge. Here, we formulate this interesting and effective experience by introducing the concept of \textit{augmented synchronization}, defined as follows. 
	\begin{definition}
		A solution of \eqref{eq:system} is said to be an augmented synchronous solution if for all $t\geq0$, $(x(t),z(t))\in\mathcal{D}_G$ and
		\begin{equation*}
		\dot{z}(t)=\mathrm{col}(\dot{\theta}(t),\dot{V}(t))\equiv0.
		\end{equation*}	
	\end{definition}
	\begin{definition} \label{def:1}
		A solution of \eqref{eq:system} is said to be converging to augmented synchronization if for all $t\geq0$, $(x(t),z(t))\in\mathcal{D}_G$ and
		\begin{equation*}
		\dot{z}(t)=\mathrm{col}(\dot{\theta}(t),\dot{V}(t))\to0,\;\mathrm{as}\; t\to\infty.
		\end{equation*}	
	\end{definition}
	The prepositional adjective \textit{augmented} distinguishes our definition from the convention frequency synchronization. It emphasizes that not only synchronous frequencies but also steady voltages are required. Generally, frequency synchronization holds if there is a common frequency $\omega_s\in\rr$ such that $\dot{\theta}(t)=\omega_s\mathds{1}_n$. By working in a rotating framework, without loss of generality, we assume $\omega_s=0$. 
	\begin{remark}
		Compared with previous studies on synchronization that were built on ODEs, our formulation takes a different perspective. We define synchronization by the algebraic variables instead of states. Physically, that means we regard synchronization as a property of electrical sinusoidal voltages at all buses across the grid rather than a property among generators. Such formulation accounts for the voltage dynamics and provides a unified framework to incorporate heterogeneous devices such as power and current dynamics, which cannot be handled in previous studies \cite{9163825,zhu2018stability,motter2013spontaneous,dorfler2013synchronization,dorfler2012synchronization}, since there is no phase or voltage state in such dynamics. Our formulation seems more compatible and desirable for future power systems, considering generators are gradually giving place to inverter-interfaced heterogeneous devices. 
	\end{remark}
	
	\subsection{Problem Statement}
	With the above formulations, we now re-state the two questions that we aim to answer in this paper.
	
	Consider a solution $(x(t),z(t))$ of \eqref{eq:system} and the following two properties of the solution.
	
	\textit{Property 1:}  The solution stays in $\mathcal{D}_G$ for all $t\geq0$ and it holds that $z(t)$ is bounded and $\dot{z}(t)\to0$, as $t\to\infty$.
	
	\textit{Property 2:} The solution satisfies that 
	\begin{equation}\label{eq:converge2}
	(x(t),z(t))\to\mathcal{E},\;\text{as}\;\; t\to\infty,
	\end{equation}
	where the convergence to a set is defined in the sense of the distance to the set converging to zero.
	
	First, we are interested in under what conditions the practice wisdom, which assesses transient stability by only observation of $z(t)$, i.e., \textit{Property 1}, would imply the convergence to the equilibrium set, i.e. \textit{Property 2}. Obviously, the former is a necessary condition for the latter, but generally, it is not sufficient. This problem is stated as follows.
	\begin{problem}\label{problem:1} 
		Consider a solution $(x(t),z(t))$ of \eqref{eq:system}. Under what conditions does \textit{Property 1} of the solution imply \textit{Property 2}?
	\end{problem}
	
	Note that \eqref{eq:converge2} implies the convergence of $\dot{x}(t)$ to zero, i.e.,
	\begin{equation}\label{eq:converge1}
	\dot{x}(t)=f(x(t),z(t))\to0,\;\text{as}\;\; t\to\infty,
	\end{equation}
	which has a natural physical meaning from the engineering point of view, i.e., it requires the system reach steady state. But generally \eqref{eq:converge1} does not imply \eqref{eq:converge2} if the solution is not bounded. Note also since the solution stays in $\mathcal{D}_G$, it is equivalent to study Problem \ref{problem:1} in the ODEs system \eqref{eq:systemode}. 
	
	Second, we are interested in under what conditions the power system \eqref{eq:system} can achieve augmented synchronization. Specifically, we consider the follows problem.
	\begin{problem}\label{problem:2}
		From which initial point $(x_0,z_0)\in\mathcal{D}_G$ does the solution $(x(t),z(t))$ of \eqref{eq:system} satisfy \textit{Property 1}?
	\end{problem}  
	In fact, the classical equilibrium-based Lyapunov stability analysis provides an answer to this question with prior knowledge of the post-disturbance equilibrium. Here, we aim to tackle this question in an equilibrium-independent way.
	
	The first question will be addressed in Section \ref{sec:explain} and the second  in Section \ref{sec:extend}.  

	\section{Why Practice Wisdom works}\label{sec:explain}
	This section provides our results of Problem \ref{problem:1}.
	We find that what underlies the practice wisdom is a widely satisfied property of power systems, which we refer to as \textit{augmented synchronization detectability}. In the following part, we present a rigorous definition of this concept and several checkable criteria with illustrative examples.
	\subsection{Augmented Synchronization Detectability}
	Throughout this section, we will assume that we have observed a solution of \eqref{eq:system} that satisfies \textit{Property 1}. It can be viewed as certain detectability-type property of the solution if \textit{Property 1} implies \textit{Property 2}. We introduce the concept of augmented synchronization detectability to capture this property, which is defined as follows.
	\begin{definition}
		We say a solution $(x(t),z(t))$ of \eqref{eq:system} is augmented synchronization detectable (AS-detectable) if \textit{Property 1} of the solution implies \textit{Property 2}.
	\end{definition}
	
	\begin{remark}
		Augmented synchronization detectability is closely related to the well-known concepts of zero-state detectability \cite{100932} and  the output-to-state stability \cite{sontag1997output}. All three concepts are relevant to the property that one can infer the behavior of all states solely based on the observation of part of states or a certain function of states, e.g., the outputs. However, compared with zero-state detectability and output-to-state stability, AS-detectability is much weaker and requires neither $z(t)$ nor $x(t)$ converging to some specified points. Instead, it only focuses on the convergence of their time derivatives $\dot{z}(t)$ and $\dot{x}(t)$ to zero, which enables equilibrium-independent analytics.
	\end{remark}
	The remainder of this section is devoted to finding checkable conditions under which AS-detectability holds, which provides the practice wisdom with a solid theoretical foundation.
	\subsection{Conditions for Non-Degenerate Solutions}
	We begin with a simple situation where the solution is non-degenerate, defined as follows.
	\begin{definition}\label{def:nondegenerate}
		Suppose $(x(t),z(t))$ is a solution of \eqref{eq:system} that stays in $\mathcal{D}_G$ for all $t\geq0$. We say the solution is non-degenerate if $\frac{\partial g}{\partial x_2}$ has constant a full column rank on the solution, i.e., $$\text{rank}\left(\frac{\partial g}{\partial x_2}(x_2(t),z(t))\right)=n_2,\;\; \forall t\geq0$$ and the matrix $(\frac{\partial g}{\partial x_2})^{\dagger}\frac{\partial g}{\partial z}$ is bounded, i.e., $\exists M>0$ s.t.
		$$\left\|\frac{\partial g}{\partial x_2}^{\dagger}(x_2(t),z(t))\frac{\partial g}{\partial z}(x_2(t),z(t))\right\|\leq M,\;\; \forall t\geq0$$
		where $\frac{\partial g}{\partial x_2}^{\dagger}:=\left(\frac{\partial g}{\partial x_2}^T\frac{\partial g}{\partial x_2}\right)^{-1}\frac{\partial g}{\partial x_2}^T$ denotes the left inverse. Otherwise, we say a solution is degenerate.
	\end{definition}	
	For non-degenerate solutions a direct connection between $\dot{z}$ and $\dot{x}_2$ holds, as stated in the following Lemma.
	\begin{lemma}\label{le:z-x2}
		Consider a non-degenerate solution of \eqref{eq:system} that satisfies \textit{Property 1}. It holds that $\lim_{t\to\infty}\dot{x}_2(t)=0$ and $x_2(t)$ is bounded.
	\end{lemma}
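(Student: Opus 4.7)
The plan is to exploit the algebraic constraint to tie $\dot x_2$ directly to $\dot z$ along the trajectory, and then transfer the decay of $\dot z$ to $\dot x_2$ via the non-degeneracy bound. Since the solution remains in $\mathcal{D}_G$, we have $g(x_2(t),z(t))\equiv 0$ for all $t\ge 0$. Because $g\in C^2$ and the solution is differentiable, the chain rule gives the tangency identity
\begin{equation*}
\frac{\partial g}{\partial x_2}(x_2(t),z(t))\,\dot x_2(t)+\frac{\partial g}{\partial z}(x_2(t),z(t))\,\dot z(t)=0.
\end{equation*}
Non-degeneracy says that $\frac{\partial g}{\partial x_2}$ has full column rank $n_2$ along the solution, so its left inverse $\bigl(\frac{\partial g}{\partial x_2}\bigr)^{\!\dagger}$ exists at each $t$. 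Left-multiplying yields
\begin{equation*}
\dot x_2(t)=-\left(\frac{\partial g}{\partial x_2}\right)^{\!\dagger}\!\frac{\partial g}{\partial z}\,\dot z(t),
\end{equation*}
and the second half of non-degeneracy provides the pointwise estimate $\|\dot x_2(t)\|\le M\|\dot z(t)\|$. Property 1 gives $\dot z(t)\to 0$, so $\dot x_2(t)\to 0$, settling the first claim.

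For the boundedness of $x_2(t)$, the plan is to use the constraint manifold structure rather than naive integration of the velocity bound. By Assumption \ref{as:1} the set $\{g=0\}$ is locally a smooth graph $z=\phi(x_2)$, and along the non-degenerate solution $\phi$ is additionally an immersion with the inverse map $z\mapsto x_2$ controlled uniformly in the sense that its local Jacobian is bounded by $M$. I would argue that the trajectory is trapped in one connected sheet of the constraint manifold reachable from $(x_2(0),z(0))$ by continuity; on this sheet the implicit relation between $z$ and $x_2$ makes $x_2(t)$ depend continuously on $z(t)$, and the boundedness of $z(t)$ (from Property 1) then forces $x_2(t)$ to remain in a bounded subset.

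The main obstacle is precisely this second half. Integrating $\|\dot x_2(t)\|\le M\|\dot z(t)\|$ only gives $\|x_2(t)-x_2(0)\|\le M\!\int_0^t\|\dot z(s)\|\,ds$, and the total variation on the right can a priori be infinite even when $z(t)$ is bounded with $\dot z(t)\to 0$ (for instance, one can cook up $z(t)$ of type $\sin(\sqrt{t})$). So the boundedness cannot come from the pointwise velocity bound alone; it must come from the algebraic constraint $g(x_2(t),z(t))=0$ itself, interpreted as implicitly defining $x_2$ as a (multi-valued but locally single-valued) function of $z$ with uniform Lipschitz constant along the trajectory. Turning this local-to-global implicit-function argument into a rigorous uniform bound, while avoiding topological pathologies such as the trajectory jumping between sheets of the constraint surface, will be the delicate step.
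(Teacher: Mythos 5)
Your derivation of the first claim is exactly the paper's: differentiate the constraint $g(x_2(t),z(t))\equiv 0$ along the solution, use the full column rank of $\frac{\partial g}{\partial x_2}$ to left-invert, obtain $\dot x_2=-\left(\frac{\partial g}{\partial x_2}\right)^{\dagger}\frac{\partial g}{\partial z}\,\dot z$, and pass $\|\dot x_2\|\le M\|\dot z\|$ to the limit. That part is correct and complete.

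The second claim is where your proposal has a genuine gap, and you flag it yourself: you never actually establish that $x_2(t)$ is bounded. Your diagnosis of why the naive route fails is, however, exactly right, and it is worth noting that the paper's own proof makes precisely the move you warn against: it asserts $\left\|\int_0^t\dot x_2(\tau)\,\mathrm{d}\tau\right\|\le\sup\left\|\left(\frac{\partial g}{\partial x_2}\right)^{\dagger}\frac{\partial g}{\partial z}\right\|\,\|z(t)-z(0)\|$, i.e., it pulls the state-dependent matrix out of the integral as if it were constant and replaces the total variation of $z$ by its net displacement; that inequality is not valid in general. Unfortunately, your proposed repair via the constraint manifold does not close the gap either: a sheet of $\{g=0\}$ whose local map $z\mapsto x_2$ has derivative uniformly bounded by $M$ can still be unbounded in $x_2$ over a bounded region of $z$. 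Concretely, take $n_2=1$, $m=2$, $g(x_2,z)=(z_1-\cos x_2,\;z_2-\sin x_2)$ and $\dot x_2=1/(1+x_2^2)$: then $\frac{\partial g}{\partial z}=I$, $\frac{\partial g}{\partial x_2}$ has constant rank $1$, the non-degeneracy bound holds with $M=1$, $z(t)$ is bounded and $\dot z(t)\to0$, yet $x_2(t)\to\infty$ by spiralling along the helix. So the Lipschitz-sheet idea cannot succeed without an additional hypothesis (finite total variation of $z$, properness of the sheet over the $z$-region, or some monotone quantity controlling $x_2$); as written, neither your sketch nor the paper's argument establishes the boundedness of $x_2(t)$ from \textit{Property 1} and non-degeneracy alone.
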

	\begin{proof}
		Since the solution is non-degenerate, it follows from \eqref{eq:h} that
		\begin{equation}\label{eq:x2dot}
		\dot{x}_2(t)=-\left(\frac{\partial g}{\partial x_2}\right)^{\dagger}\frac{\partial g}{\partial z}\dot{z}(t).
		\end{equation}
		Since $\|(\frac{\partial g}{\partial x_2})^{\dagger}\frac{\partial g}{\partial z}\|<\infty$, $\dot{z}(t)\to0$ implies $\dot{x}_2(t)\to0$. And for any $t\geq 0$,
		\begin{equation*}
		\left\| \int_{0}^{t}\dot{x}_2(\tau)\mathrm{d}\tau\right\|\leq\sup\left\|(\frac{\partial g}{\partial x_2})^{\dagger}\frac{\partial g}{\partial z}\right\|\|z(t)-z(0)\|.
		\end{equation*}
		Hence, it follows from the boundedness of $z(t)$ that $x_2(t)$ is bounded.
	\end{proof}
	If all state variables appear in the algebraic equations, i.e., $x_1=\emptyset$, then non-degenerate solutions are AS-detectable.
	\begin{theorem}\label{th:1}
		Suppose $x_1=\emptyset$. Then every non-degenerate solution of \eqref{eq:system} is AS-detectable.
	\end{theorem}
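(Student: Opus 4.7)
My plan is to combine Lemma \ref{le:z-x2} with a bounded-trajectory / $\omega$-limit argument tailored to the algebraic constraint. Since $x_1=\emptyset$, the state reduces to $x=x_2$, so Lemma \ref{le:z-x2} applied to a non-degenerate solution satisfying \textit{Property 1} immediately yields that $x_2(t)$ is bounded and $\dot{x}_2(t)\to 0$. Combined with the boundedness of $z(t)$ assumed in \textit{Property 1}, the entire trajectory $(x(t),z(t))$ is bounded and $\dot{x}(t)=f(x(t),z(t))\to 0$.

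From there, I would argue by contradiction to obtain \textit{Property 2}. Suppose $d((x(t),z(t)),\mathcal{E})\not\to 0$; then there exist $\epsilon>0$ and a sequence $t_k\to\infty$ with $d((x(t_k),z(t_k)),\mathcal{E})\geq \epsilon$. Boundedness lets me extract a subsequence (not relabeled) with limit $(x^*,z^*)$. Continuity of $f$ together with $\dot{x}(t_k)=f(x(t_k),z(t_k))\to 0$ gives $f(x^*,z^*)=0$, and continuity of $g$ together with $g(x_2(t),z(t))\equiv 0$ along the trajectory (since it lies in $\mathcal{D}_G\subseteq\mathbf{G}$) gives $g(x_2^*,z^*)=0$. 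Identifying $(x^*,z^*)$ with a point of $\mathcal{E}$ then contradicts $d((x(t_k),z(t_k)),\mathcal{E})\geq\epsilon$, closing the argument.

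The main obstacle I anticipate is the boundary bookkeeping underlying that last identification: a priori the limit $(x^*,z^*)$ only lies in $\overline{\mathcal{D}}$, whereas $\mathcal{E}$ is defined inside the open set $\mathcal{D}$. Assumption \ref{as:1} is stated on $\overline{\mathcal{D}}$ precisely so that the index-1 structure and all continuity arguments survive at such limit points, so that $(x^*,z^*)$ can be viewed as an element of $\overline{\mathcal{E}}$; using $d(\cdot,\mathcal{E})=d(\cdot,\overline{\mathcal{E}})$ then returns the stated convergence to $\mathcal{E}$. Aside from this bookkeeping, the reduction $x=x_2$ makes the $\omega$-limit argument essentially routine, with no Lyapunov-type construction required.
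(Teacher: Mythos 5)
Your argument is correct and is essentially the paper's own proof: both rest on Lemma \ref{le:z-x2} to get boundedness of $(x_2(t),z(t))$ and $\dot{x}_2(t)\to 0$, then use a convergent-subsequence/$\omega$-limit-point argument with continuity of $f_2$ to conclude every limit point is an equilibrium, hence convergence to $\mathcal{E}$. Your closing remark about limit points possibly landing on $\partial\mathcal{D}$ is a piece of bookkeeping the paper silently skips, so if anything you are slightly more careful than the published proof.
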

	\begin{proof}
		Provide the solution is non-degenerate and \textit{Property 1}. Lemma \ref{le:z-x2} indicates the solution $(x_2(t),z(t))$ is bounded. It follows from \cite[Lemma 4.1]{khalil2002nonlinear} that the $\omega$-limit set $L^{+}$ of the solution $(x_2(t),z(t))$ is a nonempty, compact, and invariant set. And $(x_2(t),z(t))$ approaches $L^+$ as $t\to\infty$. Let $(x^*_2,z^*)$ be an $\omega$-limit point of the solution. There is an increasing sequence of time $\{t_n\}$ such that $\lim_{n\to\infty}t_n=\infty$ and  $\lim_{n\to\infty}(x_2(t_n),z(t_n))=(x_2^*,z^*)$. By continuity of $f_2$,
		\[\lim_{n\to\infty}\dot{x}_2(t_n)=\lim_{n\to\infty}f_2(x_2(t_n),z(t_n))=f_2(x^*_2,z^*)=0.\]
		This indicates the $\omega$-limit set only consists of equilibria, which implies \textit{Property 2}.
	\end{proof}
	When $x_1\neq\emptyset$, AS-detectability can be verified given two additional conditions of $x_1(t)$.
	
	\begin{theorem}\label{th:auto2}
		Suppose $x_1\neq\emptyset$. A non-degenerate solution of \eqref{eq:system} is AS-detectable if:
		\begin{enumerate}
			\item[1)] \textit{Property 1} implies the boundedness of $x_1(t)$;
			\item[2)] No solution $(x(t),z(t))$ of \eqref{eq:system} can stay identically in $\mathcal{M}$ other than the solutions such that $f_1(x(t),z(t))\equiv0$, where $\mathcal{M}$ is defined as
			\begin{equation}\label{eq:M2}
			\mathcal{M}:=\left\lbrace (x,z)\in\mathcal{D}_G|f_2(x,z)=0\right\rbrace.
			\end{equation}
		\end{enumerate}
	\end{theorem}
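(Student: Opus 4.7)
The plan is to follow the same LaSalle-type argument used in Theorem \ref{th:1}, but now on the full state $(x_1,x_2,z)$, with condition 2 playing the role of an invariance principle to rule out non-equilibrium points in the $\omega$-limit set.

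First, I would use Lemma \ref{le:z-x2} together with \emph{Property 1} to conclude that $\dot{x}_2(t)\to 0$ and that $x_2(t)$ is bounded. Combining this with hypothesis 1) (which gives boundedness of $x_1(t)$) and the boundedness of $z(t)$ in \emph{Property 1}, the entire solution $(x(t),z(t))$ of \eqref{eq:system} is bounded and stays in $\mathcal{D}_G$. Viewing this as a solution of the embedded ODE system \eqref{eq:systemode}, I can then invoke \cite[Lemma 4.1]{khalil2002nonlinear} to deduce that its $\omega$-limit set $L^+$ is nonempty, compact, invariant under the flow of \eqref{eq:systemode}, and that $(x(t),z(t))\to L^+$ as $t\to\infty$.

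Next, I would show $L^+\subseteq\mathcal{M}$. Take any $(x^*,z^*)\in L^+$ with an increasing sequence $t_n\to\infty$ satisfying $(x(t_n),z(t_n))\to(x^*,z^*)$. Continuity of $f_2$ together with $\dot{x}_2(t_n)=f_2(x(t_n),z(t_n))\to 0$ yields $f_2(x^*,z^*)=0$, so $(x^*,z^*)\in\mathcal{M}$. Since $L^+$ is invariant, the entire trajectory through $(x^*,z^*)$ lies in $L^+\subseteq\mathcal{M}$. Also, because $L^+\subseteq\overline{\mathcal{D}_G}\subseteq\mathcal{D}_G$ (the trajectory stays in $\mathcal{D}_G$ and, under Assumption \ref{as:1} on the closure, the ODE embedding remains equivalent to \eqref{eq:system} there), this trajectory is a genuine solution of the DAE system \eqref{eq:system} lying identically in $\mathcal{M}$. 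Hypothesis 2) then forces $f_1\equiv 0$ along it, and combined with $f_2\equiv 0$ this gives $f(x,z)\equiv 0$, so the trajectory is an equilibrium, i.e. $(x^*,z^*)\in\mathcal{E}$. Hence $L^+\subseteq\mathcal{E}$, and the convergence $(x(t),z(t))\to L^+$ gives \emph{Property 2}.

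The main obstacle I anticipate is the invariance step: I need to be careful that an invariant set for the embedded ODE system \eqref{eq:systemode} corresponds to bona fide DAE trajectories lying in $\mathcal{D}_G$ so that hypothesis 2), which is phrased for solutions of \eqref{eq:system}, can be applied. This is handled by the equivalence between \eqref{eq:system} and \eqref{eq:systemode} on $\mathcal{D}_G$ noted earlier, but it is the point that requires the cleanest justification. Once that bridge is in place, the rest reduces to the same $\omega$-limit reasoning as in Theorem \ref{th:1}, merely enriched by condition 2) to exclude limit trajectories that stay in $\mathcal{M}$ without being equilibria.
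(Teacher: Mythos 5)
Your argument is correct and follows essentially the same route as the paper's proof: boundedness of the full solution via Lemma \ref{le:z-x2} and hypothesis 1), the $\omega$-limit set argument showing $L^+\subset\mathcal{M}$, and hypothesis 2) applied to the (invariant) largest invariant subset of $\mathcal{M}$ to force $f\equiv 0$ there. One small caution: the inclusion $\overline{\mathcal{D}_G}\subseteq\mathcal{D}_G$ you invoke is false as written (since $\mathcal{D}$ is open), though the underlying concern --- that limit points remain in $\mathcal{D}_G$ so that hypothesis 2) applies --- is the same one the paper itself leaves implicit, relying on Assumption \ref{as:1} holding on $\overline{\mathcal{D}}$.
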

	\begin{proof}
		Given \textit{Property 1}, it follows from Lemma \ref{le:z-x2} and condition 1) that the solution $(x(t),z(t))$ is bounded. Hence, $(x(t),z(t))$ approaches its $\omega$-limit set $L^+$ as $t\to\infty$.
		
		For any $(x^*,z^*)\in L^+$, it follows from the same argument in the previous proof that $f_2(x^*,z^*)=0$.
		Let $\Upsilon$ be the largest invariant set of $\mathcal{M}$. Thus, \[L^+\subset\Upsilon\subset \mathcal{M}.\]
		
		Condition 2) further guarantees for any $(x^*,z^*)\in\Upsilon$, $f(x^*,z^*)=0$. 
		Since the solution approaches $L^+$ as $t\to\infty$, it approaches $\Upsilon$ as well. Hence, the solution approaches $\mathcal{E}$ as $t\to\infty$, which implies \textit{Property 2}.
	\end{proof}
	We next show how to use the previous conditions to verify AS-detectability in two examples that are prevalent in modern power systems: an inverter-interfaced power source and a synchronous generator.
	
	\begin{example}\label{ex:ps}
		Consider an inverter-interfaced power source connected to the infinite bus via a transmission line, as shown in Fig.\ref{fig:SMIB_inverter}. We simplify the dynamics of the inverter, and assume it regulates the output power according to the deviation of the terminal phase and voltage via first-order dynamics.
		\begin{figure}[h]
			\centering
			\setlength{\abovecaptionskip}{0.2cm}	
			\setlength{\belowcaptionskip}{-0.2cm}
			\includegraphics[width=.6\hsize]{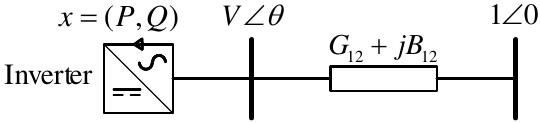}
			\caption{An inverter-interfaced power source connecting to the infinite bus.}
			\label{fig:SMIB_inverter}
		\end{figure}
		
		The system dynamics read
		\begin{equation}\label{eq:ps}
		\left\lbrace\begin{aligned}
		&\tau_1\dot{P}=-P+P^{\mathrm{ref}}-d_1(\theta-\theta^{\mathrm{ref}})\\
		&\tau_2\dot{Q}=-Q+Q^{\mathrm{ref}}-d_2(V-V^{\mathrm{ref}}),
		\end{aligned} \right.
		\end{equation}
		where $P$ and $Q$ are the terminal output active and reactive power, respectively. $V\angle\theta$ is the terminal complex voltage. $P^{\mathrm{ref}}$, $Q^{\mathrm{ref}}$, $\theta^{\mathrm{ref}}$, and $V^{\mathrm{ref}}$ are pre-specified constant reference values. $\tau_1>0$ and $\tau_2>0$ are time constants. $d_1>0$ and $d_2>0$ are droop coefficients. The algebraic equations read
		\begin{equation}\label{eq:ps-g}
		\left\lbrace \begin{aligned}
		P-G_{11}V^2-G_{12}V\cos\theta-B_{12}V\sin\theta=0\\
		Q+B_{11}V^2-G_{12}V\sin\theta+B_{12}V\cos\theta=0.
		\end{aligned}\right. 
		\end{equation}
		
		\begin{proposition}\label{pro:dg}
			Every solution of the power system in Example \ref{ex:ps} satisfying \textit{Property 1} is non-degenerate. And any solution of the system is AS-detectable.
		\end{proposition}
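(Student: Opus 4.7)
My plan is to verify that Example \ref{ex:ps} fits the framework of Theorem \ref{th:1}—specifically the $x_1=\emptyset$ setting with $x_2=\mathrm{col}(P,Q)$ and $z=\mathrm{col}(\theta,V)$—and to check non-degeneracy as a direct consequence of Property 1. Once these two ingredients are in place, the AS-detectability claim reduces to an invocation of Theorem \ref{th:1}, so the bulk of the work is really the non-degeneracy check.

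For the first assertion, I would read off $\frac{\partial g}{\partial x_2}$ directly from the algebraic equations \eqref{eq:ps-g}: since each of $P$ and $Q$ appears linearly with coefficient $1$ in exactly one equation and does not appear in the other, $\frac{\partial g}{\partial x_2}$ equals the $2\times 2$ identity matrix. This makes the constant full column rank condition trivial and also gives $(\frac{\partial g}{\partial x_2})^{\dagger}=I_2$, so non-degeneracy collapses to showing that $\frac{\partial g}{\partial z}$ is bounded along the trajectory. Differentiating \eqref{eq:ps-g} with respect to $\theta$ and $V$ produces entries that are affine combinations of $V$, $\sin\theta$, $\cos\theta$, $V\sin\theta$, and $V\cos\theta$. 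Under Property 1, $z(t)=\mathrm{col}(\theta(t),V(t))$ is bounded, so in particular $V(t)$ is bounded; the sinusoidal terms are uniformly bounded as well. Hence $\|\frac{\partial g}{\partial z}(\cdot)\|$ admits a uniform bound $M$ on the solution, which completes the non-degeneracy proof.

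For the second assertion, solutions that fail Property 1 are AS-detectable vacuously by definition, so only solutions satisfying Property 1 need attention. For such a solution, the first assertion yields non-degeneracy, and since $x_1=\emptyset$ I can invoke Theorem \ref{th:1} directly to obtain Property 2. I do not anticipate any genuine obstacle in the argument; the key structural observation—that the power-balance algebraic equations are affine in $(P,Q)$ with identity Jacobian—is exactly what trivializes both the rank and the boundedness conditions of Definition \ref{def:nondegenerate}, reducing the proposition to a one-step application of Theorem \ref{th:1}.
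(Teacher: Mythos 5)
Your proposal is correct and follows essentially the same route as the paper: compute $\frac{\partial g}{\partial x_2}=I_2$ from \eqref{eq:ps-g}, bound $\frac{\partial g}{\partial z}$ using the boundedness of $z$ from \emph{Property 1}, and then invoke Theorem \ref{th:1} with $x_1=\emptyset$. The only difference is that you spell out the boundedness of $\frac{\partial g}{\partial z}$ and the vacuous case explicitly, which the paper leaves to the reader.
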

		\begin{proof}
			In this case, $x_1=\emptyset$. It follows from \eqref{eq:ps-g} that
			\[\frac{\partial g}{\partial x_2}=\begin{bmatrix}
			1&0\\0&1
			\end{bmatrix}.\]
			And one can verify that $\frac{\partial g}{\partial z}$ is bounded for any bounded $z$. Hence, any solution that satisfies \textit{Property 1} is non-degenerate. Then, by Theorem \ref{th:1}, any solution is AS-detectable.
		\end{proof}
	\end{example}
	\begin{example}\label{ex:sg}
		Consider a synchronous generator (SG) connected to an infinite bus via a transmission line, as shown in Fig.\ref{fig:SMIB}.
		\begin{figure}[h]
			\centering
			\setlength{\abovecaptionskip}{0.2cm}	
			\setlength{\belowcaptionskip}{-0.2cm}
			\includegraphics[width=.6\hsize]{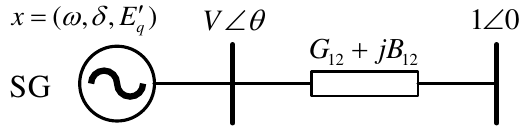}
			\caption{A synchronous generator connecting to the infinite bus.}
			\label{fig:SMIB}
		\end{figure}
		
		The system dynamics read \cite{1085625}
		\begin{equation}\label{eq:sg}
		\left\lbrace\begin{aligned}
		&\dot{\delta}=\omega\\
		&M\dot{\omega}=-D\omega-P^e+P^g\\
		&T'_{d0}\dot{E}'_{q}=-\frac{x_{d}}{x'_{d}}E'_{q}+\frac{(x_{d}-x'_{d})V\cos(\delta-\theta)}{x'_{d}}+E^f,
		\end{aligned} \right.
		\end{equation}
		where $E_{q}'\angle\delta$ is the $q$-axis transient internal complex voltage and $V\angle\theta$ is the terminal complex voltage. $\omega$ is the frequency derivation. $M$ is the moment of inertia. $D$ is the damping coefficient. $T_{d0}'$ is the $d$-axis open-circuit transient time constant. $x_{d}$, $x_{d}'$, and $x_q$ are the $d$-axis synchronous reactance, $d$-axis transient reactance, and $q$-axis synchronous reactance, respectively. For a realistic SG, $x_{d}>x_{d}'$.
		$P^g$ and $E^f$ are constant parameters meaning the power generation and the excitation voltage, respectively. The terminal output active power $P^e$ and reactive power $Q^e$ are given by
		\begin{equation}\label{eq:sg-Pe}
		P^e=\frac{V^2\sin[2(\delta-\theta)](x'_{d}-x_{q})}{2x_{q}x'_{d}}+\frac{E'_{q}V\sin(\delta-\theta)}{x'_{d}},
		\end{equation}
		and
		\begin{equation}\label{eq:sg-Qe}
		\begin{aligned}
		Q^e=&\frac{V^2\cos[2(\delta-\theta)](x'_{d}-x_{q})}{2x_{q}x'_{d}}+\frac{E'_{q}V\cos(\delta-\theta)}{x'_{d}}\\&-\frac{x'_{d}+x_{q}}{2x_{q}x'_{d}}V^2.
		\end{aligned}
		\end{equation}
		Power balance constraints at the terminal bus give the algebraic equations:
		\begin{equation}\label{eq:sg-g}
		\left\lbrace \begin{aligned}
		P^e-G_{11}V^2-G_{12}V\cos\theta-B_{12}V\sin\theta=0\\
		Q^e+B_{11}V^2-G_{12}V\sin\theta+B_{12}V\cos\theta=0,
		\end{aligned}\right.
		\end{equation}
		where $G_{ij}$ and $B_{ij}$ are elements of the admittance matrix corresponding to the transmission line.
		
		\begin{proposition}\label{pro:sg}
			Every non-degenerate solution of the power system in Example \ref{ex:sg} is AS-detectable.
		\end{proposition}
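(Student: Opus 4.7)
The plan is to apply Theorem~\ref{th:auto2} to the SG system, so the first step is to identify the correct state splitting. Inspecting \eqref{eq:sg-Pe}--\eqref{eq:sg-g}, the algebraic equations involve the states $\delta$ and $E'_{q}$ (through $P^e$ and $Q^e$) but not the frequency deviation $\omega$. Accordingly I would set $x_1=\omega$ and $x_2=\text{col}(\delta, E'_{q})$, placing us in the regime $x_1\neq\emptyset$ of Theorem~\ref{th:auto2}. It then remains to verify the two hypotheses of that theorem.

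For condition~1, I would exploit the kinematic identity $\dot{\delta}=\omega$ from \eqref{eq:sg}. Non-degeneracy together with \textit{Property 1} permits invoking Lemma~\ref{le:z-x2}, which gives $\dot{x}_2(t)\to 0$ as $t\to\infty$. Reading off the $\delta$-component yields $\omega(t)=\dot{\delta}(t)\to 0$, so $\omega(t)$ is bounded (in fact convergent to zero). Thus \textit{Property 1} implies boundedness of $x_1(t)$ without any separate argument based on the damping term $D\omega$ or an energy estimate.

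For condition~2, I would consider any solution that stays identically in $\mathcal{M}=\{(x,z)\in\mathcal{D}_G\mid f_2(x,z)=0\}$. Along such a solution $f_2\equiv 0$, and its $\delta$-component gives $\dot{\delta}(t)=\omega(t)\equiv 0$. Differentiating in $t$ then forces $\dot{\omega}(t)\equiv 0$, i.e., $f_1(x(t),z(t))\equiv 0$, which is precisely condition~2. With both conditions verified, AS-detectability of every non-degenerate solution follows directly from Theorem~\ref{th:auto2}.

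I do not anticipate a serious obstacle. The argument is purely structural: it uses neither the explicit form of $P^e$, $Q^e$, nor the excitation dynamics, nor any specific property of the coefficients $D$, $M$, $T'_{d0}$, or $x_d, x'_d, x_q$. The crucial feature is that $\omega$ is itself the time derivative of a state ($\delta$) appearing in the algebraic equations, which lets us deduce both the convergence and the invariance behavior of $\omega$ automatically from the corresponding behavior of $\dot{x}_2$ supplied by Lemma~\ref{le:z-x2}.
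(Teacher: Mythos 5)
Your proposal is correct and follows essentially the same route as the paper: the same splitting $x_1=\omega$, $x_2=\mathrm{col}(\delta,E'_q)$, an application of Theorem~\ref{th:auto2}, and the observation that $\omega\equiv 0$ on any invariant subset of $\mathcal{M}$ forces $\dot\omega\equiv 0$ for condition~2. The only (minor) divergence is in condition~1, where you obtain boundedness of $\omega$ from $\dot{x}_2(t)\to 0$ via Lemma~\ref{le:z-x2}, whereas the paper argues from boundedness of $\delta(t)$; your version is, if anything, the tighter of the two.
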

		\begin{proof}
			In this case, $x_1=\omega$, $x_2=(\delta,E'_q)^T$, and $z=(\theta,V)^T$.
			For condition 1) in Theorem \ref{th:auto2}, it suffices to prove $\omega(t)$ is bounded when $x_2(t)$ is bounded. Since $\delta(t)=\int_{0}^{t}\omega(\tau)\mathrm{d}\tau+\delta(0)$, it follows from continuity of $\omega(t)$ that $\omega(t)$ must be bounded if $\delta(t)$ is bounded.			
			For condition 2) in Theorem \ref{th:auto2}, it suffices to prove $f_1=\dot{\omega}(t)=0$ on the largest invariant set of $\mathcal{M}$. Since $\dot{\delta}(t)=\omega(t)=0$ on $\mathcal{M}$, it holds that $\omega(t)\equiv0$ on the invariant set of $\mathcal{M}$. Hence, $\dot{\omega}(t)=0$ on the invariant set of $\mathcal{M}$.
			Hence, it follows from Theorem \ref{th:auto2} that every non-degenerate solution of this system is AS-detectable.
		\end{proof}
	\end{example}

	\subsection{Compositional Conditions for Modular Structured Power Systems}\label{subsec:modular}
	Although we have only illustrated the usage of previous conditions on two single machine systems, it should be noted that those conditions apply to general large-scale power systems. However, it might be a difficult task to determine the boundedness of $x_1(t)$ and the invariant set in $\cal M$ for a general large system. In this subsection, we will propose a compositional approach to AS-detectability for modular structured power systems, which requires only conditions of each low-dimensional subsystem.
	
	
	Suppose a large-scale power system consisting of finite $n$ buses. Let $i\in\{1,\ldots,n\}$ be the bus index. We say the power system is modular structured if there is a decomposition of variables and functions, denoted as $x_{1,i}$, $x_{2,i}$, $z_i$, $f_{1,i}$, and $f_{2,i}$ for $i=1,\ldots,n$, such that: $x_1=\text{col}(x_{1,1},\ldots,x_{1,n})$, $x_2=\text{col}(x_{2,1},\ldots,x_{2,n})$, $z=\text{col}(z_1,\ldots,z_{n})$, $f_1=\text{col}(f_{1,1},\ldots,f_{1,n})$, and $f_2=\text{col}(f_{2,1},\ldots,f_{2,n})$; and
	for each $i$, $f_{1,i}$ and $f_{2,i}$ are only functions of local variables, i.e., $f_{1,i}(x_{1,i},x_{2,i},z_i)$ and  $f_{2,i}(x_{1,i},x_{2,i},z_i)$.
	
	Hence, for a modular structure power system, \eqref{eq:system1a} and \eqref{eq:system1b} can be equivalently written as the composition of subsystems: for $i=1,\ldots,n$
	\begin{equation}\label{eq:subsystem}
	\begin{aligned}
	\dot{x}_{1,i}&=f_{1,i}(x_{1,i},x_{2,i},z_i)\\
	\dot{x}_{2,i}&=f_{2,i}(x_{1,i},x_{2,i},z_i), 
	\end{aligned}
	\end{equation}
	Note, however, all subsystems are still coupled via the algebraic equation $0=g(x_2,z)$. With a little abuse of notation, we allow $x_{1,i}=\emptyset$ and $f_{1,i}=\emptyset$ for some $i$ to adapt to some subsystems, state variables of which are all involved in the algebraic equations. 
	
	The following theorem states that for a modular structure power system, conditions for AS-detectability of the entire system \eqref{eq:system} can be decomposed into conditions of each subsystem \eqref{eq:subsystem}.
	
	\begin{theorem}\label{pro:modular}
		Suppose the power system \eqref{eq:system} is modular structured. Every non-degenerate solution of \eqref{eq:system} is AS-detectable if for each $i$ with $x_{1,i}\neq\emptyset$ the subsystem \eqref{eq:subsystem} regarding $z_i(t)$ as input satisfies:
		\begin{enumerate}
			\item[1)] The boundedness of $z_i(t)$ and $x_{2,i}(t)$ implies the boundedness of $x_{1,i}(t)$;
			\item[2)] For any bounded input $z_i(t)$, no solution $x_i(t)$ of \eqref{eq:subsystem} exists, other than the solutions that $f_{1,i}(x_i(t),z_i(t))\equiv0$, such that $(x_i(t),z_i(t))$ stays identically in $\mathcal{M}_i$  where $\mathcal{M}_i$ is defined as
			\[\mathcal{M}_i:=\{(x_i,z_i)\in\rr^{n_i}\times\rr^2|f_{2,i}(x_{i},z_i)=0\}.\]
		\end{enumerate}
	\end{theorem}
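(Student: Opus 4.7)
The plan is to follow the same skeleton as the proof of Theorem \ref{th:auto2}, but verify its two ingredients (boundedness of $x_1(t)$ and the invariant-set characterization in $\mathcal{M}$) by combining the subsystem hypotheses rather than invoking Theorem \ref{th:auto2} directly. The reason a direct invocation is delicate is that condition 2) of Theorem \ref{th:auto2} is stated for \emph{any} solution that stays in $\mathcal{M}$, whereas the hypothesis of the present theorem only concerns solutions with bounded input $z_i(t)$. Thus the argument will keep track of boundedness carefully.

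First I would fix a non-degenerate solution $(x(t),z(t))$ of \eqref{eq:system} satisfying \textit{Property 1}. Lemma \ref{le:z-x2} immediately gives that $x_2(t)$ is bounded and $\dot{x}_2(t)\to 0$. Because of the modular structure, each component $z_i(t)$ and $x_{2,i}(t)$ is bounded, so subsystem hypothesis 1) applied to each $i$ with $x_{1,i}\neq\emptyset$ yields boundedness of $x_{1,i}(t)$. For the indices with $x_{1,i}=\emptyset$ there is nothing to show. Hence the full state $x_1(t)$ is bounded, and so the full solution $(x(t),z(t))$ is bounded.

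Next, following the pattern of the proof of Theorem \ref{th:auto2}, let $L^+$ denote the $\omega$-limit set of the solution, which by \cite[Lemma 4.1]{khalil2002nonlinear} is nonempty, compact and invariant, and the solution approaches $L^+$ as $t\to\infty$. Picking any $(x^*,z^*)\in L^+$, by continuity of $f_2$ and $\dot{x}_2(t)\to 0$ along any time sequence realizing $(x^*,z^*)$, one obtains $f_2(x^*,z^*)=0$, so $L^+\subset\mathcal{M}$. The goal is then to show $L^+\subset\mathcal{E}$.

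The key step, which I expect to be the main obstacle, is turning the compositional condition 2) into a statement about $L^+$. Because $L^+$ is invariant, through any point $(x^*,z^*)\in L^+$ there exists a complete trajectory $(\tilde{x}(t),\tilde{z}(t))$ of \eqref{eq:systemode} lying entirely in $L^+\subset\mathcal{M}$; compactness of $L^+$ guarantees this trajectory (and hence each projection $\tilde{z}_i(t)$) is bounded. By the modular structure, $(\tilde{x}_i(t),\tilde{z}_i(t))$ is a solution of \eqref{eq:subsystem} driven by the bounded input $\tilde{z}_i(t)$, and it stays identically in $\mathcal{M}_i$. Subsystem hypothesis 2) then forces $f_{1,i}(\tilde{x}_i(t),\tilde{z}_i(t))\equiv 0$ for every $i$ with $x_{1,i}\neq\emptyset$; the remaining indices are vacuous. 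Combining with $f_{2,i}\equiv 0$ on the trajectory gives $f(\tilde{x}(t),\tilde{z}(t))\equiv 0$, and evaluating at $t=0$ gives $f(x^*,z^*)=0$, i.e.\ $(x^*,z^*)\in\mathcal{E}$. Thus $L^+\subset\mathcal{E}$, and since $(x(t),z(t))$ approaches $L^+$, \textit{Property 2} follows, proving AS-detectability.
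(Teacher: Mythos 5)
Your proof is correct and follows essentially the same route as the paper's: Lemma \ref{le:z-x2} plus condition 1) for boundedness, then an $\omega$-limit-set argument combined with condition 2) applied subsystem-by-subsystem. In fact you make explicit the step the paper compresses into ``by a similar argument'' --- namely that compactness of the $\omega$-limit set supplies the bounded complete trajectories in $\mathcal{M}_i$ needed to legitimately invoke the bounded-input hypothesis of condition 2) --- which is a worthwhile clarification but not a different approach.
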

	\begin{proof}
		Consider a non-degenerate solution $(x(t),z(t))$ of \eqref{eq:system} that satisfies \textit{Property 1}. It follows from Lemma \ref{le:z-x2} that $x_2(t)$ is bounded and $f_{2}(t)\to0$ as $t\to\infty$. Hence, for each $i$, $z_i(t)$ and $x_{2,i}(t)$ are bounded and $f_{2,i}(t)\to0$. By 1), $x_{1,i}(t)$ is bounded and hence $x_{1}(t)$ is bounded. It follows from the modular structure that $x_i(t)$ is also a solution to \eqref{eq:subsystem} under bounded input $z_i(t)$. Hence, by a similar argument, 2) implies that the $\omega$-limit set of $(x_i(t),z_i(t))$ must be the equilibrium set, which implies \textit{Property 2}.
	\end{proof}
	Theorem \ref{pro:modular} provides a compositional approach to verify AS-detectability for large-scale modular structure power systems. Since conditions 1) and 2) are local and depend on subsystem $i$ only, it makes no difference to check them in an interconnected system or in the single-machine-infinite-bus setting.
	
	The modular structure is ubiquitous in power systems, which are widely satisfied in the study of power system stability (see for example \cite{1085625,schiffer2014conditions,8890862}). Physically, this means each subsystem is controlled independently and interacts with others only by power flows. When inter-area control is employed, one may still treat an independent area as one modular block. An example of using this modular property to verify AS-detectability of the IEEE 9-bus system will be presented in Section \ref{subsec:9bus}.
	\begin{remark}
		By similar arguments as in Example \ref{ex:ps}-\ref{ex:sg}, one may verify that other power system dynamical devices satisfy conditions in Theorem \ref{pro:modular}. Combined with the ubiquitous modular structure of power systems, we conjecture that AS-detectability, at least for non-degenerate solutions, is a common property of power systems. 
	\end{remark}
	\begin{remark}
		Technically, a solution of \eqref{eq:system} that converges to an unstable equilibrium is indeed a solution that satisfies both \textit{Property 1} and \textit{Property 2}. Nevertheless, the possibility of such a solution taking place in practice is zero, since the stable manifold of an unstable equilibrium has zero measure in the state space \cite{357,zaborszky1988phase}. And hence an infinitesimal perturbation can cause the solution to diverge from the unstable equilibrium. This idea has led to the theory of quasi-stability region, which practically regards the stable manifold of unstable equilibrium in the interior of the stability region, as part of the quasi-stability region \cite{chiang1996quasi,chiang1996quasi2}. This fact, together with AS-detectability, provides a theoretical explanation for the long-observed practice wisdom that one may assess transient stability in practice solely based on the observation of augmented synchronization.
	\end{remark}
	\subsection{Physical Interpretation of Degeneration}
	This subsection will provide an interpretation of degeneration in the context of power systems, which helps to build insightful intuitions of AS-detectability.
	
	It is reasonable to assume that $\frac{\partial g}{\partial z}$ is bounded when the solution is bounded, in most power system models. Therefore, degeneration as defined in Definition \ref{def:nondegenerate} often takes place when $\frac{\partial g}{\partial x_2}$ has a deficient column rank. This indicates that an infinitesimal change of $x_2$, given by the null space of $\frac{\partial g}{\partial x_2}$, will not change the value of the function $g$. Physically, since $g$  represents the power balance at buses, degeneration implies the terminal output power of some devices is locally irrelevant to their internal states $x_2$. 
	
	To further illustrate, consider for example the single machine infinite bus power system as described in Example \ref{ex:sg}.
	It follows from \eqref{eq:sg-Pe}-\eqref{eq:sg-g} that
	\begin{equation*}
	\frac{\partial g}{\partial x_2}=\begin{bmatrix}
	\frac{V^2(x'_d-x_q)\cos[2(\delta-\theta)]}{x'_dx_q}+\frac{E'_{q}V\cos(\delta-\theta)}{x'_{d}}&\frac{V\sin(\delta-\theta)}{x'_{d}}\\
	-\frac{V^2(x'_d-x_q)\sin[2(\delta-\theta)]}{x'_dx_q}-\frac{E'_{q}V\sin(\delta-\theta)}{x'_{d}}&\frac{V\cos(\delta-\theta)}{x'_{d}}
	\end{bmatrix}.
	\end{equation*}
	It yields
	\begin{equation*}
	\mathrm{det}\left(\frac{\partial g}{\partial x_2}\right)=\frac{V^2}{x'^2_d}\left(\frac{x'_d-x_q}{x_q}V\cos(\delta-\theta)+E'_q\right).
	\end{equation*}
	Hence if $V\neq0$ and $\frac{x'_d-x_q}{x_q}V\cos(\delta-\theta)+E'_q\neq0$ on the solution, the solution is non-degenerate if it is bounded.
	
	Two possibilities of degeneration appear. First, $V=0$ takes place at some point on the closure of the solution. Since $V=0$ yields $\mathrm{det}(\frac{\partial g}{\partial z})=0$ as well, Assumption \ref{as:1} does not hold for this solution neither. Physically, this would only happen when the generator's terminal undergoes a purely metallic short-circuit fault. It is natural that for solutions satisfying $V(t)=0$ or $V(t)\to0$, no conclusion of $x(t)$ can be made, even though we observe the convergence of $z(t)$.
	
	Now we assume $V>0$ and turn to the second possibility of degeneration, i.e.,
	\begin{equation}\label{eq:remark-singular}
	\frac{x'_d-x_q}{x_q}V\cos(\delta-\theta)+E'_q=0.
	\end{equation}
	Substituting \eqref{eq:remark-singular} into \eqref{eq:sg-Pe} and \eqref{eq:sg-Qe}, we obtain
	$P^e=0$ and $Q^e=-\frac{1}{x_q}V^2$.
	This indicates that the terminal output power of the generator becomes independent of the system state $E'_q\angle\delta$, in which case the detectability from output to state is lost.
	
	Similar results can be made for large-scale modular structured power systems that degeneration takes place when there exists one device whose output power becomes totally irrelevant to its internal states. Such scenarios should be rare in practical power systems.
	\section{When augmented synchronization}\label{sec:extend}
	In this section, we turn to the second question: under what conditions a power system can reach augmented synchronization? To this end, we consider a more general problem: given an output function $\eta(x,z)$, starting from what initial value $(x_0,z_0)\in\mathcal{D}_G$ the solution of \eqref{eq:system} would satisfy $\eta(x(t),z(t))\to0$ as $t\to\infty$. Clearly, solving the problem with $\eta=h$ would answer our second question. However, the choice of $\eta$ is not restricted to $h$, as long as $\eta\to0$ implies $h\to0$, which provides additional flexibility. For example, it is sometimes more convenient to set $\eta=f_2$ in power systems as will be shown in Section \ref{sec:case}.
	
	Based on this idea, we introduce the region of attraction to $\eta$ ($\eta$-RoA) defined as follows.
	\begin{definition}\label{def:roa}
		Consider the power system \eqref{eq:system}. Given a function $\eta:\mathcal{D}\to\rr^{n_\eta}$ for some dimension $n_\eta$, the $\eta$-RoA of \eqref{eq:system} is defined as
		\[R_\eta:=\left\{(x,z)\in\mathcal{D}_G\left|
		\begin{aligned}
		\lim_{t\to\infty}\eta((x(t),z(t)))=0,\\
		(x(0),z(0))=(x,z)
		\end{aligned}\right. \right\},\]
		where $(x(t),z(t))$ is the solution of \eqref{eq:system} with initial value $(x(0),z(0))$.
		We say a set $\Omega$ to be an estimation of the $\eta$-RoA if $\Omega\subset R_\eta$.
	\end{definition}
	\begin{remark}
		Note that the above definition does not imply the set $\{(x,z)\in\mathcal{D}_G|\eta(x,z)=0\}$ to be an attracting or stable set as in the literature, which usually requires positive invariance of the set \cite[Chapter 4]{fradkov2013nonlinear}. Here, we do not require positive invariance nor attractivity in its neighborhood of the set. These relaxations are critical for power system applications, in which $\{(x,z)\in\mathcal{D}_G|h(x,z)=0\}$ is often not positively invariant.
	\end{remark}
	
	
	Now suppose we have obtained an estimation of the $\eta$-RoA. The next theorem states that under some additional conditions solutions starting in $\eta$-RoA satisfy \textit{Property 1}. 
	\begin{theorem}\label{th:stable}
		Consider the power system \eqref{eq:system} satisfying Assumption \ref{as:1} and a function $\eta$ so that $\eta\to0$ implies $h\to0$ on $\mathcal{D}_G$. Let a set $\Omega\subset\mathcal{D}_G$ such that
		\begin{enumerate}
			\item $\Omega$ is positively invariant;
			\item $z$ is bounded on $\Omega$;
			\item $\Omega$ is an estimation of the $\eta$-RoA of \eqref{eq:system}.
		\end{enumerate}
		Then, every solution starting from $\Omega$ satisfies \textit{Property 1}. Further, if every solution in $\Omega$ is AS-detectable, then, every solution converges to the equilibrium set $\mathcal{E}$. 
	\end{theorem}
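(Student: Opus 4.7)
The plan is to chain together the three hypotheses with the identification $\dot z = h(x,z)$ coming from \eqref{eq:zdot} and the hypothesis that $\eta\to 0$ forces $h\to 0$, then invoke AS-detectability for the second conclusion. Since \emph{Property 1} is a conjunction of three sub-properties (the solution stays in $\mathcal{D}_G$, $z(t)$ is bounded, and $\dot z(t)\to 0$), I would verify them one by one along any trajectory starting from an arbitrary $(x_0,z_0)\in\Omega$.

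First, I would fix $(x_0,z_0)\in\Omega$ and argue that the solution $(x(t),z(t))$ of \eqref{eq:system} is defined for all $t\geq 0$ and stays in $\mathcal{D}_G$. The existence on $[0,\infty)$ is tacit in hypothesis (3), since membership of $(x_0,z_0)$ in the $\eta$-RoA presupposes that the limit $\lim_{t\to\infty}\eta(x(t),z(t))=0$ is meaningful. Positive invariance of $\Omega$ in hypothesis (1), together with $\Omega\subset\mathcal{D}_G$, then gives $(x(t),z(t))\in\mathcal{D}_G$ for all $t\geq 0$, establishing the first sub-property of Property 1. Boundedness of $z(t)$ is immediate from hypothesis (2) evaluated along the trajectory (which lives in $\Omega$), giving the second sub-property.

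For the remaining sub-property $\dot z(t)\to 0$, I would recall that on $\mathcal{D}_G$ the DAE system \eqref{eq:system} is equivalent to the ODE embedding \eqref{eq:systemode}, so that $\dot z(t) = h(x(t),z(t))$ pointwise. By hypothesis (3), $\eta(x(t),z(t))\to 0$ as $t\to\infty$, and by the structural assumption on $\eta$ that $\eta\to 0$ implies $h\to 0$ on $\mathcal{D}_G$, this forces $h(x(t),z(t))\to 0$, hence $\dot z(t)\to 0$. This completes the proof of Property 1.

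For the second (conditional) conclusion, I would simply invoke the definition of AS-detectability: if every solution in $\Omega$ is AS-detectable, then Property 1 along that solution implies Property 2, i.e., $(x(t),z(t))\to\mathcal{E}$ as $t\to\infty$. I do not expect any real obstacle here; the only mildly delicate point is justifying global existence of the solution, which is handled by the standing well-posedness remarks and by the implicit assumption built into the definition of $R_\eta$ that the trajectory is defined for all $t\geq 0$ in order for the limit condition $\eta(x(t),z(t))\to 0$ to be meaningful. Hence the argument reduces to a clean composition of the definitions of $\eta$-RoA, AS-detectability, and the ODE embedding \eqref{eq:systemode}.
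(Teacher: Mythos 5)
Your proposal is correct and follows essentially the same route as the paper's own proof: positive invariance gives residence in $\mathcal{D}_G$, hypothesis 2) gives boundedness of $z$, the $\eta$-RoA property together with the assumption that $\eta\to0$ forces $h\to0$ gives $\dot z\to0$, and AS-detectability yields convergence to $\mathcal{E}$. Your explicit handling of the $\eta$-to-$h$ step and of global existence is slightly more careful than the paper's one-line treatment, but the argument is the same.
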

	\begin{proof}
		Since $\Omega$ is positively invariant, every solution starting from $\Omega$ stays in $\Omega\subset\mathcal{D}_G$ for all $t\geq0$, and hence by 2) $z(t)$ is bounded. By 3) $\Omega$ is an estimation of the $h$-RoA, and hence $\dot{z}\to0$ as $t\to\infty$. Therefore, such solution satisfies \textit{Property 1}. By AS-detectability, such solution converges to $\mathcal{E}$.
	\end{proof}
	This theorem enables an equilibrium-independent method to verify power system transient stability. If $\Omega$ contains no Lyapunov unstable equilibria, then all solutions in $\Omega$ converge to the Lyapunov stable equilibrium set. One can construct $\Omega$ and justifying AS-detectability without knowing the exact post-disturbance equilibrium. Then, one can assess transient stability by checking whether the post-disturbance initial value locates inside $\Omega$ or not.
	
	The rest of this section will present our results on how to estimate the $\eta$-RoA.
	Our results are built on Lyapunov-like $\cal V$-functions, which measure the distance between the current state to the desired situation, i.e., $\eta(x,z)=0$. Such $\mathcal{V}$-functions are natural extensions of Lyapunov functions with only a different perspective that treats $\eta(x,z)=0$, instead of an equilibrium, as the nominal motion.
	
	To streamline the presentation, we first introduce the following notations. For a $C^1$ function $\mathcal{V}:\rr^n\times\rr^m\to\rr$, let $\mathcal{V}_l^{-1}$ denote the intersection of its $l$-sublevel set and $\mathbf{G}$, defined as
	\[\mathcal{V}_l^{-1}:=\left\lbrace(x,z)\in\rr^n\times\rr^m|\mathcal{V}(x,z)\leq l\right\rbrace\cap\mathbf{G}.\]
	If the sublevel set contains several disjoint components, $\mathcal{V}_l^{-1}$ refers to one connected component.
	A continuous function $\alpha:[0,a)\to[0,\infty)$ is said to be a $\cal K$ function if it is strictly increasing and $\alpha(0)=0$ \cite{khalil2002nonlinear}.
	
	We establish three theorems to provide estimations of $\eta$-RoA via three types of $\mathcal{V}$-functions, which differ in specific requirements of $\mathcal{V}$ and $\dot{\mathcal{V}}$.
	\subsection{Type I }
	\begin{theorem}\label{th:s1}
		If there exist a scalar $C^1$ function $\mathcal{V}:\rr^n\times\rr^m\to\rr$, a vector function $\xi:\rr^n\times\rr^m\to\rr^{n_\xi}$ for some dimension $n_\xi$, and $\cal K$ functions $\alpha$, $\beta$, and $\gamma$ such that for every $(x,z)\in\mathcal{D}_G$:
		\begin{enumerate}
			\item $\alpha(\|\eta(x,z)\|)\leq \mathcal{V}(x,z)\leq\beta(\|\xi(x,z)\|)$,
			\item $\dot{\mathcal{V}}(x,z)\leq-\gamma(\|\xi(x,z)\|)$,
		\end{enumerate}
		then, for any $l\in\rr$ such that $\mathcal{V}_l^{-1}\subset\mathcal{D}_G$, $\mathcal{V}_l^{-1}$ is positively invariant and is an estimation of the $\eta$-RoA. Moreover, if on $\mathcal{D}_G$, $\|z\|\to\infty$ implies $\mathcal{V}\to\infty$, then $z$ is bounded on $\mathcal{V}_l^{-1}$.
	\end{theorem}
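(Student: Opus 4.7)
The plan is to exploit the monotonicity that condition 2) forces on $\mathcal V$ along trajectories, and to read off each of the three claims from it. Positive invariance, inclusion in $R_\eta$, and boundedness of $z$ all follow from the sandwich in condition 1) combined with $\dot{\mathcal V}\le 0$, essentially independently of one another.

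For positive invariance of $\mathcal{V}_l^{-1}$, I start from $(x_0,z_0)\in\mathcal{V}_l^{-1}$ and work with the embedded ODE system \eqref{eq:systemode}, whose solutions coincide with those of \eqref{eq:system} while they remain in $\mathcal{D}_G$. Condition 2) gives $\dot{\mathcal V}\le -\gamma(\|\xi\|)\le 0$, so $\mathcal V$ cannot rise above $l$ along the trajectory. I would turn this into genuine invariance with a maximal-interval argument: $\mathcal{V}_l^{-1}$ is closed (sublevel set intersected with $\mathbf G$, taking one connected component) and sits inside the open set $\mathcal{D}$, so letting $T$ be the supremum of times the solution stays in $\mathcal{V}_l^{-1}$, if $T<\infty$ the solution still lies in $\mathcal{D}_G$ at $t=T$ and can be continued slightly further while $\mathcal V\le l$ persists, contradicting the choice of $T$.

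For $\mathcal{V}_l^{-1}\subset R_\eta$, along such a trajectory $\mathcal V(x(t),z(t))$ is non-increasing and bounded below by $\alpha(\|\eta\|)\ge 0$, so it converges to some $\mathcal V^\infty\ge 0$. The crux is to rule out $\mathcal V^\infty>0$. If it were positive, condition 1) would give $\beta(\|\xi(x(t),z(t))\|)\ge\mathcal V^\infty$ for every $t$, and inverting the $\mathcal K$ function $\beta$ on its range (which contains $\mathcal V^\infty\le\mathcal V(x_0,z_0)\le\beta(\|\xi(x_0,z_0)\|)$) yields a uniform positive lower bound $\|\xi\|\ge\beta^{-1}(\mathcal V^\infty)$. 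Plugging into condition 2), $\dot{\mathcal V}\le -\gamma(\beta^{-1}(\mathcal V^\infty))<0$ uniformly in $t$, which drives $\mathcal V$ to $-\infty$ and contradicts $\mathcal V\ge 0$. Thus $\mathcal V^\infty=0$, and the lower bound $\alpha(\|\eta\|)\le\mathcal V$ together with continuity of $\alpha^{-1}$ at zero delivers $\eta(x(t),z(t))\to 0$, i.e.\ $(x_0,z_0)\in R_\eta$.

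The last claim is the one-line contrapositive of the coercivity hypothesis: if $z$ were unbounded on $\mathcal{V}_l^{-1}$, a sequence of points there with $\|z\|\to\infty$ would force $\mathcal V\to\infty$, contradicting $\mathcal V\le l$. The step I expect to be most delicate is the rigorous justification of positive invariance — ruling out that a trajectory escapes $\mathcal{V}_l^{-1}$ by exiting $\mathcal{D}_G$ transversally rather than by crossing $\{\mathcal V=l\}$ — which is exactly where the hypothesis $\mathcal{V}_l^{-1}\subset\mathcal{D}_G$ and Assumption \ref{as:1} (valid on the closure) are needed. The convergence argument is a one-shot Lyapunov-style contradiction whose only mild subtlety is checking that $\mathcal V^\infty$ lies in the range of $\beta$ before applying $\beta^{-1}$.
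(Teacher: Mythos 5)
Your proposal is correct and follows essentially the same route as the paper's proof: positive invariance from $\dot{\mathcal V}\le 0$, convergence of $\mathcal V$ to a limit $v\ge 0$, a contradiction via $\|\xi\|\ge\beta^{-1}(v)>0$ forcing $\dot{\mathcal V}$ uniformly negative, and the sandwich $0\le\alpha(\|\eta\|)\le\mathcal V\to 0$, with the coercivity claim handled by the same contrapositive. Your added care on the maximal-interval argument for invariance and on $v$ lying in the range of $\beta$ are points the paper glosses over, but they do not change the argument.
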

	\begin{proof}
		Since $\mathcal{V}$ is continuous and non-increasing on $\mathcal{D}_G$, $\mathcal{V}_{l}^{-1}$ is positively invariant under \eqref{eq:system}. Hence, any solution starting in $\mathcal{V}_l^{-1}$ will stay in $\mathcal{D}_G$ for all $t\geq0$. Since $\mathcal{V}(x(t),z(t))$ is non-increasing and bounded from below by zero, $\mathcal{V}(x(t),z(t))$ must have a finite limit $v\geq0$ as $t\to\infty$.
		
		Now we prove $v=0$. If $v>0$, by 1) we have $0<v\leq\beta(\|\xi\|)$ for all $t\geq0$, which implies $\|\xi\|\geq\beta^{-1}(v)>0$. Hence, by 2) it holds that for all $t\geq0$
		\[\dot{\mathcal{V}}\leq-\gamma(\|\xi\|)\leq-\gamma(\beta^{-1}(v))<0.\]
		Hence, for all $t\geq0$
		\begin{equation*}
		\begin{aligned}
		\mathcal{V}(x(t),z(t))=&\mathcal{V}(x_0,z_0)+\int_{0}^{t}\dot{\mathcal{V}}(x(\tau),z(\tau))\mathrm{d}\tau\\
		\leq& \mathcal{V}(x_0,z_0)-\gamma(\beta^{-1}(v))t.
		\end{aligned}
		\end{equation*}
		The right-hand side will eventually become negative, which contradicts $v>0$. Hence, it holds that $0\leq\alpha(\|\eta\|)\leq \mathcal{V}\to0$, as $t\to\infty$. It follows from the sandwich theorem that $\eta\to0$, as $t\to\infty$, which proves the first claim.
		
		For the second claim, it directly follows that for every finite $l$, $z$ must be bounded on $\mathcal{V}_l^{-1}$, otherwise, $\mathcal{V}\to\infty$ and $l$ cannot be bounded.
	\end{proof}
	\begin{remark}
		In usual Lyapunov-like functions \cite{khalil2002nonlinear}, one would require the lower and upper bounds of $\mathcal{V}$ are $\mathcal{K}$ functions of the same variable, e.g., $\alpha(\|\eta\|)\leq \mathcal{V}\leq\beta(\|\eta\|)$, which can be viewed as a special case of Theorem \ref{th:s1} when setting $\xi=\eta$. In that case, Theorem \ref{th:s1} implies $\{(x,z)\in\mathcal{V}_l^{-1}|\eta(x,z)=0\}$ is a positively invariant set. In fact, given $(x_0,z_0)\in\mathcal{V}_l^{-1}$ and $\xi(x_0,z_0)=0$, since $\mathcal{V}$ is non-increasing, 1) implies that for all $t\geq0$,
		\[0\leq \mathcal{V}(x(t),z(t))\leq \mathcal{V}(x_0,z_0)\leq0,\]
		and hence $\{(x,z)\in\mathcal{V}_l^{-1}|\xi(x,z)=0\}$ is positively invariant.
		However, the set $\{(x,z)\in\mathcal{V}_l^{-1}|\eta(x,z)=0\}$ is often not positively invariant for power systems when $\eta=h$ or $\eta=f_2$. For example, consider the simple power system in Example \ref{ex:sg}, in which case the set $\{f_2=0\}=\{\omega=0,\dot{E}'_q=0\}$ is not invariant since on the set there exist points so that $\dot{\omega}\neq0$. Hence, to admit power systems applications we introduce $\xi$ as a relaxation. Note that by 1) $\xi=0$ implies $\eta=0$. Hence, $\{(x,z)\in\mathcal{V}_l^{-1}|\xi(x,z)=0\}\subset\{(x,z)\in\mathcal{V}_l^{-1}|\eta(x,z)=0\}$, which admits the former smaller set being positively invariant while the latter larger set not.
	\end{remark}
	\begin{remark}
		Theorem \ref{th:s1} is, to the best of our knowledge, a new result even though its formulation is based on similar concepts in the study of partial stability \cite{vorotnikov2012partial} and the pioneering work of J. L. Willems \cite{willems1974partial}. However, our result differs from partial stability, since we focus on the convergence of $\eta$ rather than part of state variables. It also differs from Willems's condition, as we do not require that $\mathcal{V}(x,z)$ vanish at the origin, which allows for equilibrium-independent analysis. Nevertheless, one important common feature among them is that they all differ from the classical Lyapunov function that requires $\mathcal{V}$ to be positive definite w.r.t. certain equilibrium. In fact, a function $\mathcal{V}(x,z)$ satisfying condition 1) in Theorem \ref{th:s1} is not necessarily a positive definite function in $\cal D$, or even in $\mathcal{D}_G$ since we allow it to vanish when $\eta(x,z)=0$. As will be illustrated in Section \ref{sec:case}, this property enables analysis without the prior knowledge of the targeted equilibrium.
	\end{remark}
	
	A natural candidate for a type-I $\mathcal{V}$-function is given in the Krasovskii's form as follows.
	\begin{equation}
	\mathcal{V}(x,z)=f(x,z)^TPf(x,z),
	\end{equation}
	where $P=P^T\succ0$ is a constant positive definite matrix. It meets the condition 1) in Theorem \ref{th:s1} with $\xi=f$, $\eta=f_2$, and
	\[\lambda_{min}(P)\|f_2\|^2\leq \mathcal{V}\leq\lambda_{max}(P)\|f\|^2,\]
	where $\lambda_{min}(P)$ and $\lambda_{max}(P)$ denote the minimal and the maximal eigenvalue of $P$, respectively.
	It yields
	\begin{equation*}
	\dot{\mathcal{V}}=\frac{\partial \mathcal{V}}{\partial x}\dot{x}+\frac{\partial \mathcal{V}}{\partial z}\dot{z}
	=f^T\left( PJ(x,z)+ J(x,z)^TP \right)f,
	\end{equation*}
	where
	\[J(x,z)=\frac{\partial f}{\partial x}-\frac{\partial f}{\partial z}\left( \frac{\partial g}{\partial z}\right) ^{-1}\frac{\partial g}{\partial x}.\]
	
	If one can verify that for all $(x,z)\in\mathcal{D}_G$
	\begin{equation}\label{eq:contraction}
	PJ(x,z)+ J(x,z)^TP\prec0,
	\end{equation}
	then $\mathcal{V}$ meets the condition 2) in Theorem \ref{th:s1}.
	
	In fact \eqref{eq:contraction} can be relaxed in case $\xi\neq f$. Section \ref{sec:case} will provide an example of using this type-I $\mathcal{V}$-function to estimate the $f_2$-RoA of a power system.
	
	\begin{remark}
		Regions satisfying condition \eqref{eq:contraction} is related to the concept of contracting region in the study of contraction analysis in DAEs systems \cite{nguyen2020contraction}. One may further allow $P$ to be state-dependent and the positive definite matrix $P(x,z)$ can serve as a metric of the space. 
	\end{remark}
	\subsection{Type II}
	In some cases, it may be difficult to obtain a sign-definite bound condition on $\mathcal{V}$ as required in Theorem \ref{th:s1}. It can be relaxed by restricting $\dot{\mathcal{V}}$ to employ the Barbalat's lemma.
	\begin{theorem}\label{th:s2}
		If there exist a scalar $C^1$ function $\mathcal{V}:\rr^n\times\rr^m\to\rr$, and a $\cal K$ function $\gamma$ such that:
		\begin{enumerate}
			\item $\mathcal{V}(x,z)$ is bounded from below on $\mathcal{D}_G$,
			\item $\dot{\mathcal{V}}(x,z)\leq-\gamma(\|\eta(x,z)\|)$ for every $(x,z)\in\mathcal{D}_G$,
			\item $\dot{\mathcal{V}}(x,z)$ is uniformly continuous in $t$,
		\end{enumerate}
		then, for any $l\in\rr$ such that $\mathcal{V}_l^{-1}\subset\mathcal{D}_G$, $\mathcal{V}_l^{-1}$ is positively invariant and is an estimation of the $\eta$-RoA. Moreover, if on $\mathcal{D}_G$, $\|z\|\to\infty$ implies $\mathcal{V}\to\infty$, then $z$ is bounded on $\mathcal{V}_l^{-1}$.
	\end{theorem}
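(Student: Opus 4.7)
The plan is to mirror the proof of Theorem~\ref{th:s1} but with Barbalat's lemma replacing the direct contradiction argument, so that the lower $\mathcal{K}$-function bound on $\mathcal{V}$ is no longer needed. First I would establish positive invariance of $\mathcal{V}_l^{-1}$: condition 2) gives $\dot{\mathcal{V}}(x,z)\leq -\gamma(\|\eta(x,z)\|)\leq 0$ on $\mathcal{D}_G$, so $\mathcal{V}(x(t),z(t))$ is non-increasing along any solution starting in $\mathcal{V}_l^{-1}\subset\mathcal{D}_G$. Hence the solution never crosses the level $l$ and remains in $\mathcal{D}_G$ for all $t\geq 0$, exactly as in Theorem~\ref{th:s1}.

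Next I would argue that $\mathcal{V}(x(t),z(t))$ has a finite limit $v_\infty$ as $t\to\infty$: condition 1) provides a lower bound on $\mathcal{D}_G$, and a non-increasing function bounded from below must converge. Consequently,
\[
\int_0^t \dot{\mathcal{V}}(x(\tau),z(\tau))\,d\tau = \mathcal{V}(x(t),z(t))-\mathcal{V}(x_0,z_0) \longrightarrow v_\infty-\mathcal{V}(x_0,z_0),
\]
i.e., the integral of $\dot{\mathcal{V}}$ along the solution has a finite limit. Combined with condition 3), read as uniform continuity of the composite map $t\mapsto \dot{\mathcal{V}}(x(t),z(t))$, Barbalat's lemma then yields $\dot{\mathcal{V}}(x(t),z(t))\to 0$ as $t\to\infty$.

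From $0\leq \gamma(\|\eta(x(t),z(t))\|)\leq -\dot{\mathcal{V}}(x(t),z(t))$ and the sandwich theorem, $\gamma(\|\eta\|)\to 0$; since $\gamma$ is a $\mathcal{K}$-function (continuous, strictly increasing, with $\gamma(0)=0$), this forces $\|\eta\|\to 0$, establishing that $\mathcal{V}_l^{-1}$ is an estimation of the $\eta$-RoA. For the last claim about boundedness of $z$, the argument from Theorem~\ref{th:s1} transfers verbatim: $\mathcal{V}(x(t),z(t))\leq l$ on $\mathcal{V}_l^{-1}$, so if $\|z\|\to\infty$ forced $\mathcal{V}\to\infty$, this would contradict the level bound, hence $z$ must remain bounded.

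The main point to flag is the clean deployment of Barbalat's lemma: the novelty over Theorem~\ref{th:s1} is that an integrability-plus-uniform-continuity criterion replaces the $\mathcal{K}$-function lower bound $\alpha(\|\eta\|)\leq \mathcal{V}$ used before, trading a sign-definiteness requirement on $\mathcal{V}$ for a regularity requirement on $\dot{\mathcal{V}}$ along trajectories. This is really the only conceptually new step; once Barbalat is invoked correctly, the remaining structural arguments (invariance, coercivity $\Rightarrow$ boundedness of $z$) are essentially inherited from Theorem~\ref{th:s1}.
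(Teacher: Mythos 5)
Your proposal is correct and follows essentially the same route as the paper's proof: positive invariance and the finite limit of $\mathcal{V}$ are inherited from the argument for Theorem~\ref{th:s1}, Barbalat's lemma gives $\dot{\mathcal{V}}\to0$, and condition 2) together with the class-$\mathcal{K}$ property of $\gamma$ yields $\eta\to0$. Your write-up is merely more explicit about the integral form of Barbalat and the inversion of $\gamma$, which the paper leaves implicit.
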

	\begin{proof}
		By the same argument as in the proof of Theorem \ref{th:s1}, $\mathcal{V}_l^{-1}$ is positively invariant under \eqref{eq:system} and $\mathcal{V}(x(t),z(t))$ must have a finite limit as $t\to\infty$.
		Since $\dot{\mathcal{V}}$ is uniformly continuous in $t$, it follows from the Barbalat's Lemma\cite[Lemma 8.2]{khalil2002nonlinear} that $\dot{\mathcal{V}}\to0$ as $t\to\infty$. By 2) this implies $\eta\to0$ as $t\to\infty$, and hence $\mathcal{V}_l^{-1}$ is an estimation of the $\eta$-RoA. The second claim follows from the same argument as in the proof of Theorem \ref{th:s1}.
	\end{proof}
	
	Often, instead of 3), it is more convenient to verify a stronger condition that is $\ddot{\mathcal{V}}(x,z)$ being bounded along the solution. An example of using type-II $\mathcal{V}$-function will be present in Section \ref{sec:case}.
	
	\subsection{Type III}
	In some cases, it is only possible to construct a $\mathcal{V}(x,z)$ with a sign-constant (not sign-definite) derivative $\dot{\mathcal{V}}\leq0$. In this case, conditions of state convergence were obtained by LaSalle, Barbashin, and Krasovskii, by exploiting the properties of the $\omega$-limit set. Here, the same idea applies to $\eta$-RoA under an additional requirement of the set $\{\dot{\mathcal{V}}(x,z)=0\}$.
	\begin{theorem}\label{th:lasalle}
		Let $\mathcal{R}\subset\mathcal{D}_G$ be a compact and positively invariant set under \eqref{eq:system}. Let $\mathcal{V}:\rr^n\times\rr^m\to\rr$ be a $C^1$ function, and let $\Upsilon$ denote the largest invariant set of $\{(x,z)\in\mathcal{R}|\dot{\mathcal{V}}(x,z)=0\}$. If it holds that:
		\begin{enumerate}
			\item $\dot{\mathcal{V}}(x,z)\leq0$ for every $(x,z)\in\mathcal{R}$,
			\item $\Upsilon\subset\{(x,z)|\eta(x,z)=0\}$,
		\end{enumerate}
		then, $\mathcal{R}$ is an estimation of the $\eta$-RoA.
	\end{theorem}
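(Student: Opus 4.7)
The proof plan is to carry out the classical LaSalle invariance argument, adapted to the $\eta$-target setting and embedded in the DAE/ODE equivalence already established in the paper. Since $\mathcal{R}\subset\mathcal{D}_G$, any solution starting there can equivalently be viewed as a solution of the ODE system \eqref{eq:systemode}, so standard $\omega$-limit machinery applies.

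First, I would fix an initial point $(x_0,z_0)\in\mathcal{R}$ and let $(x(t),z(t))$ be the resulting solution. By positive invariance of $\mathcal{R}$, the solution remains in $\mathcal{R}$ for all $t\geq 0$; since $\mathcal{R}$ is compact, the solution is bounded, so by \cite[Lemma 4.1]{khalil2002nonlinear} its $\omega$-limit set $L^+$ is a nonempty, compact, invariant subset of $\mathcal{R}$, and the trajectory approaches $L^+$ as $t\to\infty$.

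Second, I would show that $\mathcal{V}$ is constant on $L^+$ and consequently $L^+\subset\Upsilon$. By continuity of $\mathcal{V}$ on the compact set $\mathcal{R}$, the function $\mathcal{V}$ is bounded from below there; combined with condition 1) ($\dot{\mathcal{V}}\leq 0$), the real-valued function $t\mapsto \mathcal{V}(x(t),z(t))$ is non-increasing and bounded below, hence converges to some finite limit $v$ as $t\to\infty$. For any $(x^*,z^*)\in L^+$, take a sequence $t_n\to\infty$ with $(x(t_n),z(t_n))\to(x^*,z^*)$; by continuity $\mathcal{V}(x^*,z^*)=v$, so $\mathcal{V}\equiv v$ on $L^+$. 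Because $L^+$ is invariant, $\dot{\mathcal{V}}\equiv 0$ along any trajectory inside $L^+$, which shows $L^+\subset\{(x,z)\in\mathcal{R}\mid\dot{\mathcal{V}}(x,z)=0\}$. Since $L^+$ is itself invariant, it must lie inside the largest invariant subset of that set, namely $\Upsilon$.

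Third, condition 2) then gives $L^+\subset\Upsilon\subset\{(x,z)\mid\eta(x,z)=0\}$. Because the solution approaches the compact set $L^+$ (in the distance-to-set sense) and $\eta$ is continuous on $\mathcal{D}_G$, a standard argument yields $\eta(x(t),z(t))\to 0$ as $t\to\infty$: for any $\varepsilon>0$, uniform continuity of $\eta$ on a compact neighborhood of $L^+$ inside $\mathcal{R}$ combined with $\eta\equiv 0$ on $L^+$ gives $\|\eta(x(t),z(t))\|<\varepsilon$ for all sufficiently large $t$. Since the initial point in $\mathcal{R}$ was arbitrary, $\mathcal{R}\subset R_\eta$, i.e., $\mathcal{R}$ is an estimation of the $\eta$-RoA.

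The main obstacle is really only the step of showing $\mathcal{V}$ is constant on $L^+$ and extracting invariance, which is the classical LaSalle trick; the DAE aspect causes no trouble because $\mathcal{R}\subset\mathcal{D}_G$ and the embedding into \eqref{eq:systemode} keeps us inside a regime where Assumption \ref{as:1} applies and solutions are unique and forward-complete. The only subtlety worth flagging is that we do not assume $\{\eta=0\}$ is invariant (unlike in the Type I setting), so the argument must route through $\Upsilon$ rather than through a sublevel set of $\mathcal{V}$.
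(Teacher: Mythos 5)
Your proposal is correct and follows essentially the same route as the paper: the paper's proof simply invokes LaSalle's invariance theorem to conclude that every solution in $\mathcal{R}$ approaches $\Upsilon$, and then uses $\Upsilon\subset\{\eta=0\}$ together with continuity of $\eta$ to conclude $\eta\to0$, whereas you unpack the LaSalle citation into its standard $\omega$-limit-set argument ($\mathcal{V}$ constant on $L^+$, hence $L^+\subset\Upsilon$) and spell out the final uniform-continuity step. No gaps; the extra detail is a faithful expansion of the cited theorem rather than a different argument.
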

	\begin{proof}
		It follows from LaSalle's Theorem\cite[Theorem 4.4]{khalil2002nonlinear} that every solution in $\mathcal{R}$ approaches $\Upsilon$ as $t\to\infty$. Since $\Upsilon\subset\{(x,z)|\eta(x,z)=0\}$, every solution approaches $\{(x,z)|\eta(x,z)=0\}$ as $t\to\infty$.
	\end{proof}
	An example of the type-III $\mathcal{V}$-function is the well-known energy functions for lossless multi-machine power systems with $\eta=f_2$ (see \cite{chiang1987foundations} for example).
	
	Unlike previous theorems, type-III does not require any bound condition of $\mathcal{V}(x,z)$. The closure of any bounded sublevel set of $\mathcal{V}$ is a valid candidate of the compact and positively invariant set $\mathcal{R}$, although the choice $\mathcal{R}$ does not have to be tied up with $\mathcal{V}$. For example, the closure of any bounded solution is a valid candidate of $\mathcal{R}$. Therefore, similar to the well-investigated energy function, the existence of a global type-III $\mathcal{V}$-function excludes all bounded complex behaviors of power systems, e.g., sustained oscillation, chaos, etc.
	
	To end this section, we again emphasize that all three types of $\cal V$ functions do not require any knowledge of system equilibrium. And naturally, they concern the convergence of $\eta$ to zero rather than the convergence of all states to any pre-specified point. This distinguishes our theorems from the classical Lyapunov stability theory in terms of both what is requested and what is concluded. 
	\section{Illustrative Examples}\label{sec:case}
	This section presents two examples to illustrate potential applications of our results in power systems. In each example, we would employ previous theorems to first justify AS-detectability and then construct a $\mathcal{V}$ function. We begin with a single-machine-single-load system with a proportional-integral (PI) regulator, in which case a type-II $\mathcal{V}$-function is proposed that justifies the PI regulator can render the system to augmented synchronization from almost all initial points. In the second case, we consider the IEEE 9-bus system to demonstrate the modular property of AS-detectability and to show an estimation of $f_2$-RoA by a type-I $\mathcal{V}$-function. It is also an example of inherently non-isolated equilibria, in which case the classical point-based stability analysis cannot apply.

	\subsection{Single-Machine-Single-Load System}\label{subsec:PI}
	Consider the classical generator model connecting to a constant $PQ$ load via a lossless transmission line (Fig. \ref{fig:SMSL}).
	\begin{figure}[h]
		\centering
		\setlength{\abovecaptionskip}{0.2cm}	
		\setlength{\belowcaptionskip}{-0.2cm}
		\includegraphics[width=.6\hsize]{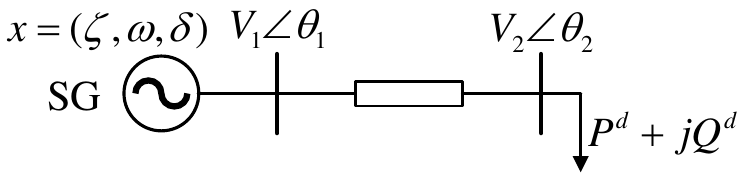}
		\caption{A synchronous generator connecting to a constant PQ load.}
		\label{fig:SMSL}
	\end{figure}
	
	The system is governed by \cite{1085625}
	\begin{equation}\label{eq:sg-2nd}
	\left\lbrace\begin{aligned}
	&\dot{\delta}=\omega\\
	&M\dot{\omega}=-D\omega-P^e+P^g,
	\end{aligned} \right.
	\end{equation}
	together with $P^g=P^g_0+u$,
	where $P^g_0$ is the fixed mechanical power input and $u$ is a simple PI regulator for some $k_1>0$ and $k_2>0$ defined as
	\[u(t)=-k_1\omega(t)-k_2\int_{0}^{t}\omega(\tau)\mathrm{d}\tau,\]
	or equivalently in the differential form by introducing state $\zeta$
	\begin{equation*}
	\dot{\zeta}=-k_2\omega,\quad u=-k_1\omega+\zeta.
	\end{equation*}
	The terminal output power reads
	\begin{equation*}
	P^e=\frac{EV_1\sin(\delta-\theta_1)}{x'_{d}},Q^e=\frac{EV_1\cos(\delta-\theta_1)}{x'_{d}}-\frac{1}{x'_{d}}V_1^2,
	\end{equation*}
	where $E>0$ is the constant generator internal voltage. The algebraic equations read
	\begin{equation*}
	\left\lbrace \begin{aligned}
	0&=P^e-B_{12}V_1V_2\sin\theta_{12}\\
	0&=Q^e+B_{11}V_1^2+B_{12}V_1V_2\cos\theta_{12}\\
	0&=P^d-B_{12}V_1V_2\sin\theta_{21}\\
	0&=Q^d+B_{22}V_2^2+B_{12}V_1V_2\cos\theta_{12},
	\end{aligned}\right.
	\end{equation*}
	where $V_1\angle\theta_1$ and $V_2\angle\theta_2$ are complex voltages at buses 1 and 2, respectively.
	
	We first show this system satisfies conditions of Theorem \ref{th:auto2} and hence any non-degenerate solution is AS-detectable. By the same argument in the proof of Proposition \ref{pro:sg}, bounded $\delta(t)$ implies bounded $\omega(t)$. Since $\zeta(t)=\zeta(0)-k_2\int_{0}^{t}\omega(\tau)\mathrm{d}\tau=\zeta(0)-k_2(\delta(t)-\delta(0))$,  bounded $\delta(t)$ also implies bounded $\zeta(t)$. Hence, condition 1) of Theorem \ref{th:auto2} is met. For any solution satisfying $f_2=\omega\equiv0$, it follows $\dot{\omega}\equiv0$ and $\dot{\zeta}\equiv0$, which meets condition 2) of Theorem \ref{th:auto2}, and hence any non-degenerate solution of this example is AS-detectable.
	
	Next, we estimate the $f_2$-RoA of the system by construct a type-II $\mathcal{V}$-function. Let
	\begin{equation*}
	\begin{aligned}
	\mathcal{Q}(x,z)=&\frac{1}{2}M\omega^2+\frac{EV_1(1-\cos(\delta-\theta_1))}{x'_d}\\
	&+B_{12}V_1V_2(1-\cos\theta_{12}),
	\end{aligned}
	\end{equation*}
	Since $B_{12}>0$ and $E>0$, the function $\mathcal{Q}(x,z)$ is bounded from below on $\mathcal{D}_G$, provided $V_1>0$, $V_2>0$.
	
	In this case, $x_1=\text{col}(\zeta,\omega)$, $x_2=\delta$, and $z=\text{col}(\theta_1,V_1,\theta_2,V_2)$. Direct calculation yields
	\[\left(\frac{\partial g}{\partial z}\right)^{-1}\frac{\partial g}{\partial x_2}=\begin{bmatrix}
	-1&0&-1&0
	\end{bmatrix}^T.\]
	Hence, by \eqref{eq:zdot}, it holds $\dot{z}=\text{col}(\omega,0,\omega,0)$. It holds
	\[\dot{\mathcal{Q}}(x,z)=-D\omega^2+\omega(P^d+P_0^g+u).\]
	Consider the following $\mathcal{V}$-function
	\[\mathcal{V}(x,z)=\mathcal{Q}(x,z)+\frac{1}{2k_2}(\zeta+P^d+P_0^g)^2.\]
	Clearly, $\mathcal{V}(x,z)$ is lower bounded. And it holds
	\[\frac{\mathrm{d}}{\mathrm{d}t}\frac{1}{2k_2}(\zeta+P^d+P_0^g)^2=-\omega(u+P^d+P_0^g)-k_1\omega^2.\]
	Hence,
	\[\dot{\mathcal{V}}(x,z)=-(D+k_1)\omega^2=-(D+k_1)\|f_2\|^2,\]
	which verifies condition 2) in Theorem \ref{th:s2}.
	To verify condition 3), we prove $\ddot{\mathcal{V}}$ is bounded when $\mathcal{V}$ is bounded. Direct calculation yields
	\begin{equation*}
	\begin{aligned}
	\ddot{\mathcal{V}}(x,z)&=-2(D+k_1)\omega\dot{\omega}\\
	&=-\frac{2(D+k_1)}{m}\omega(-D\omega+P^g_0+u-P^e).
	\end{aligned}
	\end{equation*}
	It is easy to show by contradiction arguments that $\mathcal{V}$ being bounded implies $\omega$ and $u$ being bounded. Note also $|P^e|\leq EV_1/x'_d$. Hence, $\ddot{\mathcal{V}}$ is bounded in any sublevel set $\mathcal{V}_l^{-1}$ with finite level value $l$. Hence, any $\mathcal{V}_l^{-1}\subset\mathcal{D}_G$ is a valid estimation of $f_2$-RoA of the system.
	
	For illustration, this example was built on a single-machine system and a synchronous generator. However, we remark that the analysis applies to large systems and similar dynamical subsystems, e.g., the virtual synchronous generator with energy storage. Although the generator's primary power input is conventionally assumed to be constant in transient stability analysis since the governor dynamics is usually slow, this example shows such an assumption can be relaxed, which might be necessary for microgrids with inverter-interfaced fast response power resources.
	\subsection{IEEE 9-bus System}\label{subsec:9bus}
	We now consider the IEEE 9-bus system as shown in Fig.\ref{fig:9bus}. To demonstrate the capability of our results in heterogeneous settings, we modify the benchmark so that it consists of two different synchronous generators, an inverter-interfaced power source, and three constant PQ loads. We assume the generator at bus 1 has a strong excitation control and hence is described by \eqref{eq:sg-2nd} together with the PI regulator. The generator at bus 2 is governed by the flux-decay model as in Example \ref{ex:sg}. And the power source at bus 3 is modeled as in Example \ref{ex:ps}. The differential equations of each subsystem can be found in previous examples.
	\begin{figure}[h]
		\centering
		\setlength{\abovecaptionskip}{0.2cm}	
		\setlength{\belowcaptionskip}{-0.2cm}
		\includegraphics[width=.8\hsize]{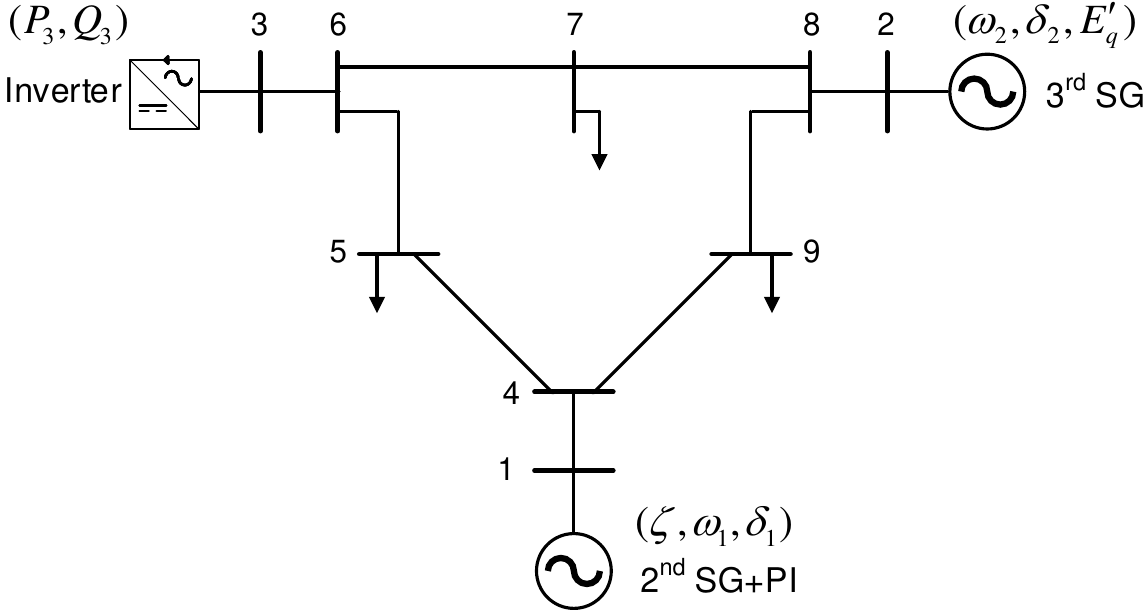}
		\caption{The IEEE 9-bus system consisting of the three aforementioned subsystems and three constant PQ loads.}
		\label{fig:9bus}
	\end{figure}
	
	The algebraic equations consist of power balance constraints at each bus, which read for $i=1,\ldots,9$
	\begin{equation*}
	\left\lbrace \begin{aligned}
	P_i&=G_{ii}V_i^2+\sum_{j\in\mathcal{N}_i}V_iV_j(B_{ij}\sin\theta_{ij}+G_{ij}\cos\theta_{ij})\\
	Q_i&=-B_{ii}V_i^2-\sum_{j\in\mathcal{N}_i}V_iV_j(B_{ij}\cos\theta_{ij}-G_{ij}\sin\theta_{ij}).
	\end{aligned}\right. 
	\end{equation*}
	
	Network and load parameters were obtained from the MATPOWER package \cite{zimmerman2010matpower}. Table \ref{tab1} reports parameters of three dynamical subsystems.
	\begin{table}[h]
		\centering
		\footnotesize
		\caption{Parameters of 3 subsystems.}
		\label{tab1}
		\begin{tabular}{l|l|l}
			\hline
			Bus&\multicolumn{1}{c|}{Parameters}&\multicolumn{1}{c}{Values}\\
			\hline
			\multirow{2}*{1}& $M$, $D$, $E$, $x_{d}'$,& 0.075, 0.032, 1.01, 0.061,\\~ &$P^g_0$ $k_1$, $k_2$  & 0.02, 0.10, 0.72 \\ 
			\hline
			\multirow{2}*{2}& $M$, $D$, $T_{d0}'$, $x_{q}$,&0.02, 0.003, 1.00, 0.20,\\ &$x_{d}$, $x_{d}'$, $P^g$, $E^f$ &  0.896, 0.12, 1.63, 1.52 \\ 
			\hline
			\multirow{2}*{3}& $\tau_{1}$, $\tau_{2}$, $d_{1}$, $d_{2}$, &10.00,   10.00,    0.10,   0.10,  \\
			& $P^{\text{ref}}$, $ Q^{\text{ref}}$, $\theta^{\text{ref}}$, $V^{\text{ref}}$ & 0.85, -0.0365,    0.0833,    1.00 \\
			\hline
		\end{tabular}
	\end{table}
	
	\subsubsection{Non-isolated and not asymptotically stable equilibrium}	
	The collective state variables of the system read 
	\[x=\text{col}(\zeta,\omega_1,\delta_1,\omega_2,\delta_2,E'_q,P_3,Q_3).\]
	The collective algebraic variable $z$ is of 18 dimension and consists of $\theta_i$ and $V_i$ at each bus.
	Setting $\zeta=0$, one equilibrium can be obtained, which reads 
	\[x^*_0=\text{col}(0,0,0.0431,0,0.4756,1.0288,0.8500,-0.0365).\]
	However, $x^*_0$ is not an isolated equilibrium. In fact, there is a continuous equilibria trajectory,	along which it always holds that $\omega_1=\omega_2=0$ but other states keep varying. Fig.\ref{fig:nonisolate} shows this equilibria trajectory as a function of $\zeta$.
	\begin{figure}[h]
		\centering
		\setlength{\abovecaptionskip}{0.2cm}	
		\setlength{\belowcaptionskip}{-0.2cm}
		\includegraphics[width=1\hsize]{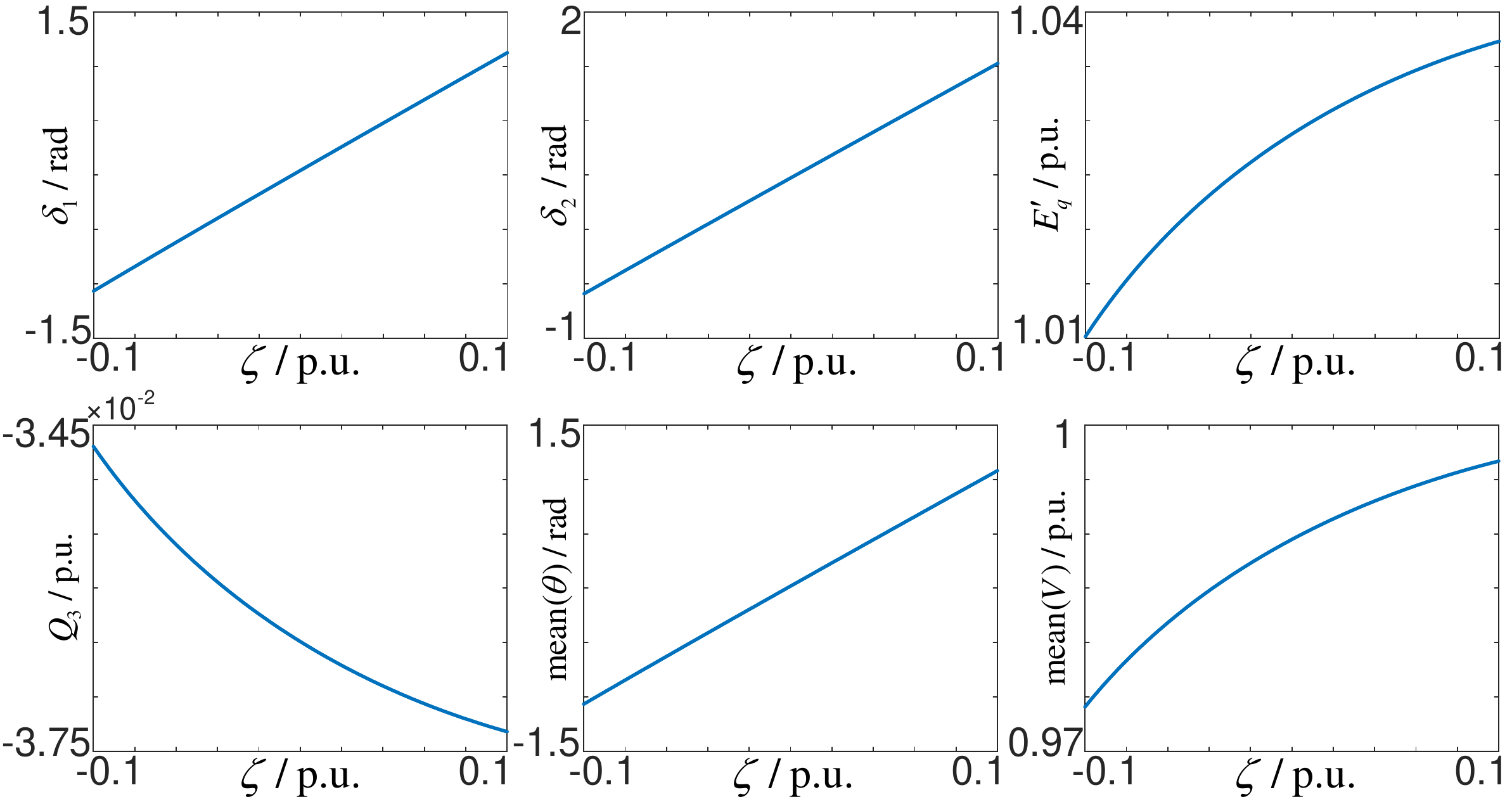}
		\caption{Equilibria continuum shown as a function of $\zeta$. Here mean($\theta$) and mean($V$) stand for the average value of all $\theta$ and $V$, respectively.}
		\label{fig:nonisolate}
	\end{figure}
	
	Unlike the well-investigated phase rotational symmetry, the equilibria continuum here cannot be eliminated by simple coordinate transformation. Indeed, phase rotational symmetry does not exist in this example due to the power-angle droop at bus 3. The inherently non-isolated equilibria actually result from the interaction between the PI regulator at bus 1 and the power-angle droop at bus 3.
	
	More importantly, any single equilibrium in this continuum is not Lyapunov asymptotically stable since any small perturbation along the direction tangent to the equilibria trajectory will make the system leave the initial equilibrium and settle in a new one. 
	To illustrate, consider the projection of the system vector field onto the plane spanned by two orthogonal directions $s_1$ and $s_2$ around the equilibrium $x^*_0$. Here, we set $s_1=\text{col}(1,0,1,0,1,1,1,1)/\sqrt{6}$ and $s_2=\text{col}(0,1,0,1,0,0,0,0)/\sqrt{2}$. Hence, $s_2$ represents the variation of $\omega_1$ and $\omega_2$, while $s_1$ represents the combination of the others. Fig.\ref{fig:vector} reports the projected vector filed. It shows that all vectors point to the line $s_2=0$, but they do not point to any particular point. Note the line $s_2=0$ is the projection of the equilibria trajectory on this plane. Now suppose the system is subjected to a small disturbance that shifts the state from $x_0^*$ to $\tilde{x}_0$, which is represented by the red dot. From this initial point, the system solution can reach augmented synchronization but it converges to a different equilibrium $\tilde{x}^*$, which is represented by the green dot. This indicates that this equilibria trajectory as a whole has certain stability property but no single equilibrium is Lyapunov asymptotically stable.
	\begin{figure}[h]
		\centering
		\setlength{\abovecaptionskip}{0.2cm}	
		\setlength{\belowcaptionskip}{-0.2cm}
		\includegraphics[width=1\hsize]{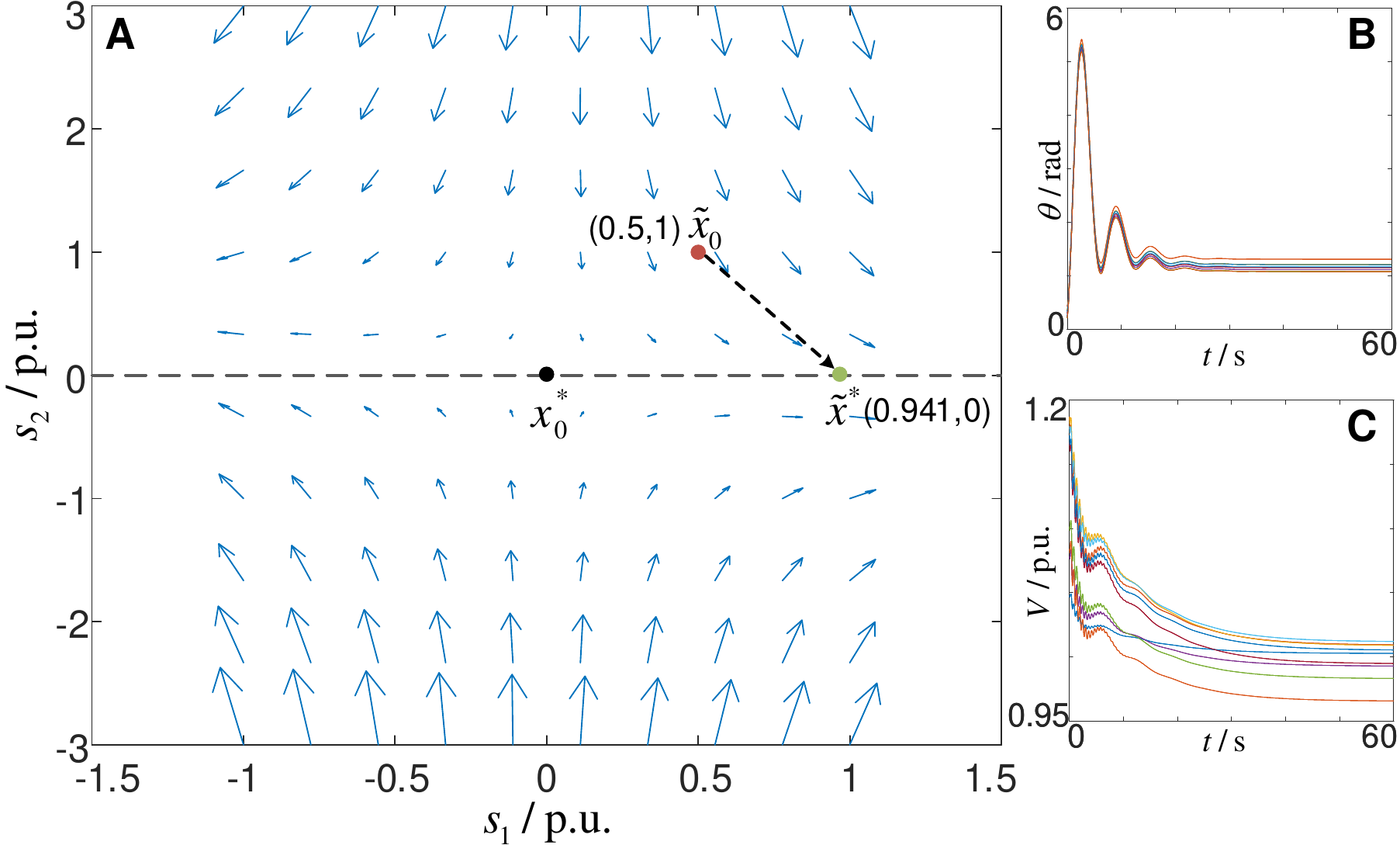}
		\caption{(A) The projection of the system vector field onto the plane spanned by $s_1$ and $s_2$ around $x^*_0$. (B,C) The solutions of $\theta$ and $V$ starting from $\tilde{x}_0$.}
		\label{fig:vector}
	\end{figure}

	\subsubsection{ AS-detectability}	
	This 9-bus system provides an example of verifying AS-detectability in a modular manner. The system has exactly the modular structure as defined in Section \ref{subsec:modular}. And each subsystem satisfies conditions 1) and 2) in Theorem \ref{pro:modular}, which can be shown by the same argument as in Example \ref{ex:sg}, \ref{ex:ps} and in Section \ref{subsec:PI}. Therefore, any non-degenerate solution of this system is AS-detectable.
	
	In this case, although $\omega(t)$ and $\dot{z}(t)$ may identically converge to 0, most other states such as $\zeta(t)$, $\theta(t)$, and $V(t)$ may converge to different values provided different initial points. Hence, one cannot single out any pre-specified equilibrium to perform transient stability analysis. Our results provide an alternative way to handle this situation by checking the convergence of $\dot{z}(t)$ instead of all states. 
	
	\subsubsection{ Estimation of the $f_2$-RoA}	
	To illustrate, let us estimate the $f_2$-RoA of the system by a type-I $\mathcal{V}$-function in the Krasovskii's form. For some positive definite matrix $P$, let $\mathcal{V}(x,z)=f(x,z)^TPf(x,z)$.
	Note for subsystem 1, $\dot{\zeta}$ is linearly related to $\dot{\delta}_1$. Hence, instead of letting
	\[\xi=f=\text{col}(\dot{\zeta},\dot{\omega}_1,\dot{\delta}_1,\dot{\omega}_2,\dot{\delta}_2,\dot{E}_q',\dot{P}_3,\dot{Q}_3),\]
	one can choose 
	\[\xi=\text{col}(\dot{\omega}_1,\dot{\delta}_1,\dot{\omega}_2,\dot{\delta}_2,\dot{E}_q',\dot{P}_3,\dot{Q}_3).\]
	This yields $f=A\xi$, where $A$ is a constant matrix and reads
	\[A=\begin{bmatrix}
	0&-k_2&0_{1\times5}\\&I_{7\times7}&
	\end{bmatrix}.\]
	Hence, condition \eqref{eq:contraction} can be relaxed as
	\begin{equation}\label{eq:LMI}
	A^T(PJ(x,z)+ J(x,z)^TP)A\prec0.
	\end{equation}

	Numerical LMI solver yields a candidate of $P$ as reported in \eqref{eq:P} such that \eqref{eq:LMI} holds in $\mathcal{V}_l^{-1}$ with $l=4$.
	\begin{figure*}
		\begin{equation}\label{eq:P}
		\footnotesize	
		P=\begin{bmatrix}
		1.136	&	-0.049	&	0.026	&	0.225	&	-0.049	&	-0.050	&	-0.507	&	-0.066	\\
		-0.049	&	0.719	&	-0.151	&	-0.237	&	-0.041	&	-0.077	&	0.005	&	0.052	\\
		0.026	&	-0.151	&	0.497	&	-0.060	&	0.131	&	0.350	&	-0.053	&	0.174	\\
		0.225	&	-0.237	&	-0.060	&	1.070	&	-0.146	&	0.096	&	0.006	&	-0.047	\\
		-0.049	&	-0.041	&	0.131	&	-0.146	&	0.461	&	-0.302	&	0.073	&	-0.143	\\
		-0.050	&	-0.077	&	0.350	&	0.096	&	-0.302	&	0.651	&	-0.012	&	0.304	\\
		-0.507	&	0.005	&	-0.053	&	0.006	&	0.073	&	-0.012	&	0.530	&	0.038	\\
		-0.066	&	0.052	&	0.174	&	-0.047	&	-0.143	&	0.304	&	0.038	&	0.463	
		\end{bmatrix}
		\end{equation}
	\end{figure*}
	Then, a conservative estimation of the $f_2$-RoA is given by $\mathcal{V}_{4}^{-1}$.
	
	In this case, $\mathcal{V}_{4}^{-1}$ is an 8-dimensional manifold in the 26-dimensional Euclidean space. To illustrate, we project it onto the $\zeta-\omega_1$ plane as shown in Fig.\ref{fig:phase9bus}. Two trajectories starting from different initial points were also projected onto the plane, which clearly shows the dependency of the converging point of $\zeta$ on the initial point. It also shows the independence of the converging point of $\omega_1$, which should always converge to 0.
	
	Fig.\ref{fig:traj1} and \ref{fig:traj2} give the solutions $\theta(t)$, $V(t)$ of the corresponding two trajectories, respectively. In both cases, the system approaches to the augmented synchronization, i.e., $\dot{z}(t)\to0$ as $t\to\infty$. However, the converging points of $\theta(t)$ and $V(t)$ are different. This again shows that augmented synchronization could be more desirable to power system transient stability analysis than equilibrium.

	\begin{figure}[h]
		\centering
		\setlength{\abovecaptionskip}{0.2cm}	
		\setlength{\belowcaptionskip}{-0.2cm}
		\includegraphics[width=.75\hsize]{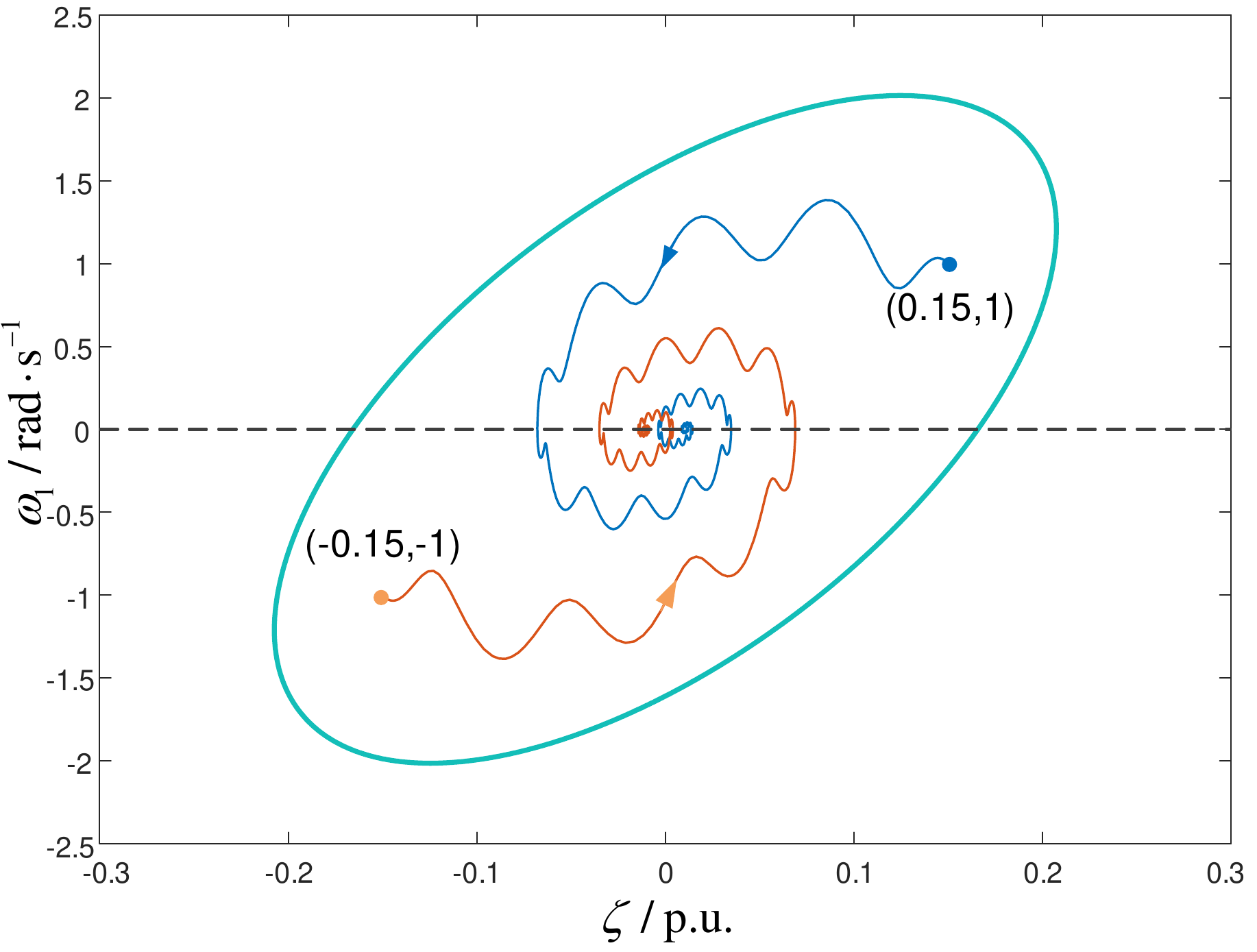}
		\caption{Projections of the estimated RoA and two trajectories. The green ellipse represents the boundary of $\mathcal{V}^{-1}_{4G}$. The orange an the blue curves are projections of two trajectories that stars from different initial points. Both trajectories converge to $w_1=0$ but to different $\zeta$.}
		\label{fig:phase9bus}
	\end{figure}
	
	\begin{figure}[h]
		\centering
		\setlength{\abovecaptionskip}{0.2cm}	
		\setlength{\belowcaptionskip}{-0.2cm}
		\includegraphics[width=.85\hsize]{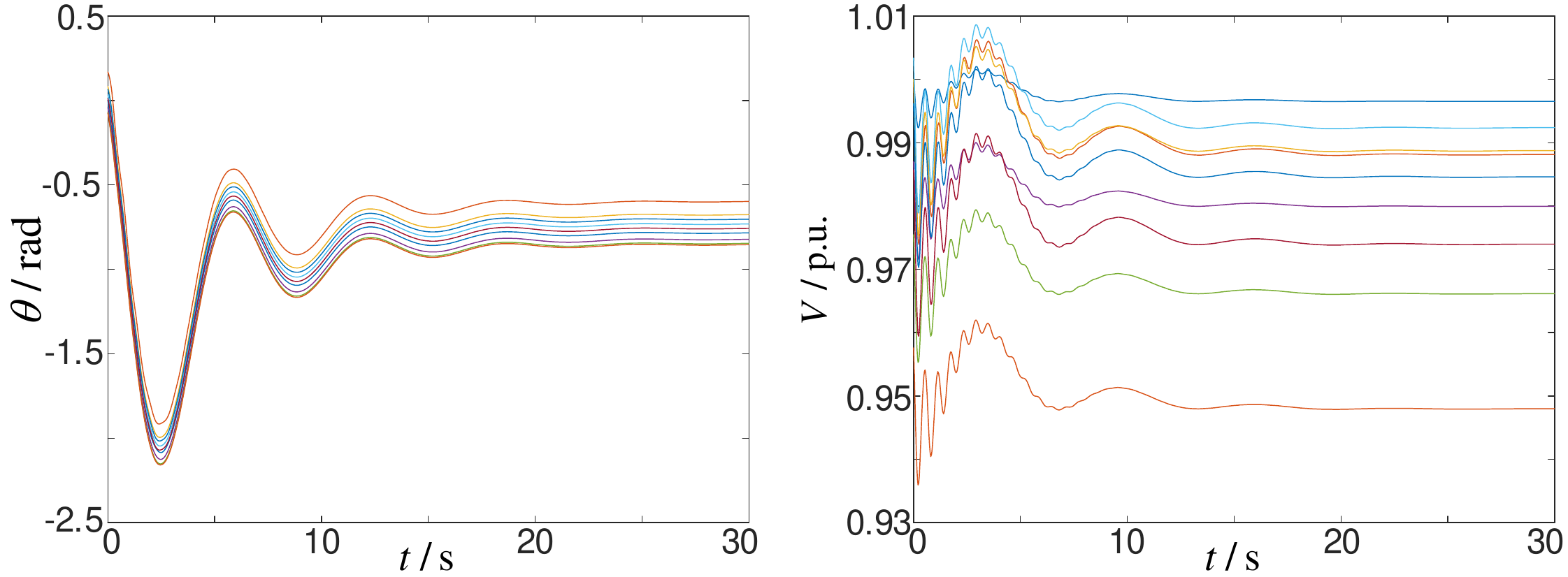}
		\caption{Solutions of $\theta$ and $V$ of all buses corresponding to the orange trajectory in Fig. \ref{fig:phase9bus}. System reaches augmented synchronization.}
		\label{fig:traj1}
	\end{figure}
	
	\begin{figure}[h]
		\centering
		\setlength{\abovecaptionskip}{0.2cm}	
		\setlength{\belowcaptionskip}{-0.2cm}
		\includegraphics[width=.85\hsize]{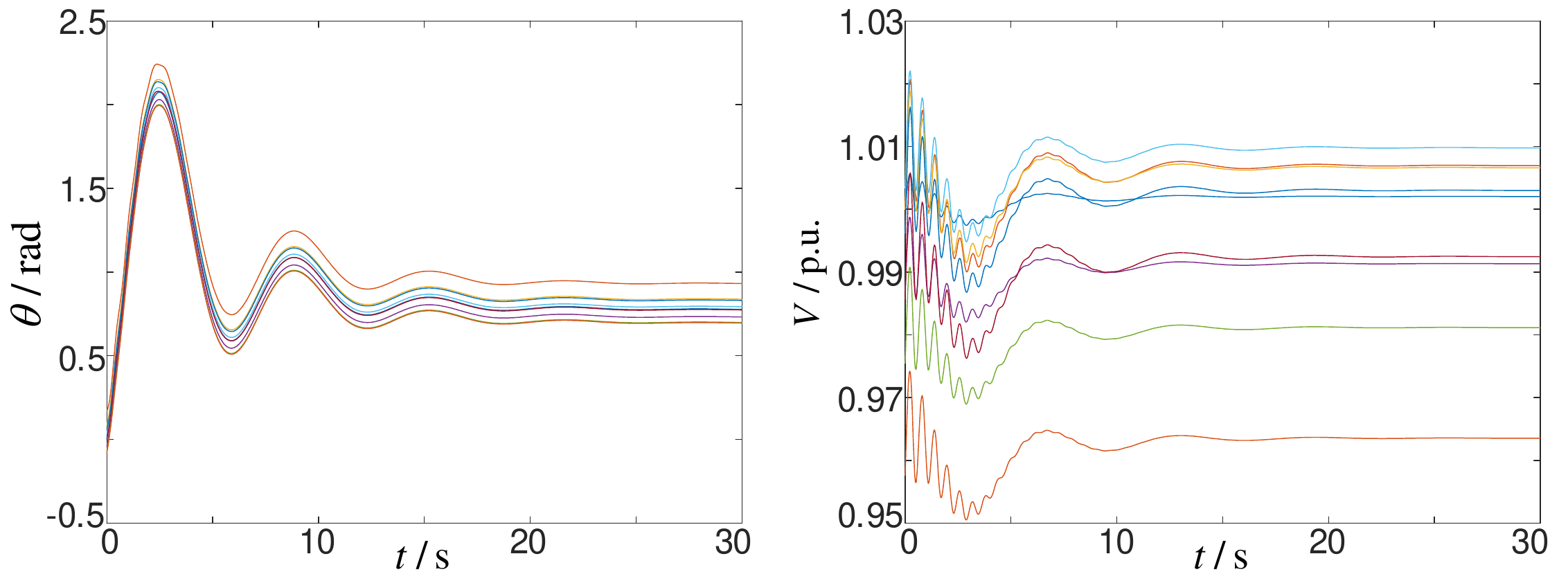}
		\caption{Solutions of $\theta$ and $V$ of all buses corresponding to the blue trajectory in Fig. \ref{fig:phase9bus}. System also reaches augmented synchronization, however, compared with Fig.\ref{fig:traj1}, $\theta$ and $V$ converge to different values.}
		\label{fig:traj2}
	\end{figure}
	\section{Concluding Remarks}
	In this paper, we re-defined the ``nominal motion'' of power systems by introducing the concept of augmented synchronization. We derived conditions for augmented synchronization detectability and presented a compositional approach to verify these conditions in modular structured power systems. That provides the long-standing practice wisdom with a solid theoretical foundation. Inspired by such wisdom, we further developed theorems to characterize augmented synchronization with different types of $\mathcal{V}$-functions. These theorems extend the classical Lyapunov-based direct method and can provide estimations of RoA w.r.t. augmented synchronization rather than an equilibrium.
	
	Our results shed new light on power system stability analysis, which may open the possibility for an equilibrium-independent analytic that better fits the demand of future smart grids. As an initial step, the concept of augmented synchronization may also provide a better perspective to understand power system stability by rethinking what should be the cornerstone, i.e., the ``nominal motion" in stability analysis. Our ongoing works include developing efficient algorithms to calculate $\eta$-RoA for large-scale power systems. 
	\bibliographystyle{IEEEtran}
	\bibliography{mybib}
	
	\appendices
	\makeatletter
	\@addtoreset{equation}{section}
	\@addtoreset{theorem}{section}
	\makeatother

\begin{IEEEbiography}[{\includegraphics[width=1in,height=1.25in,clip,keepaspectratio]{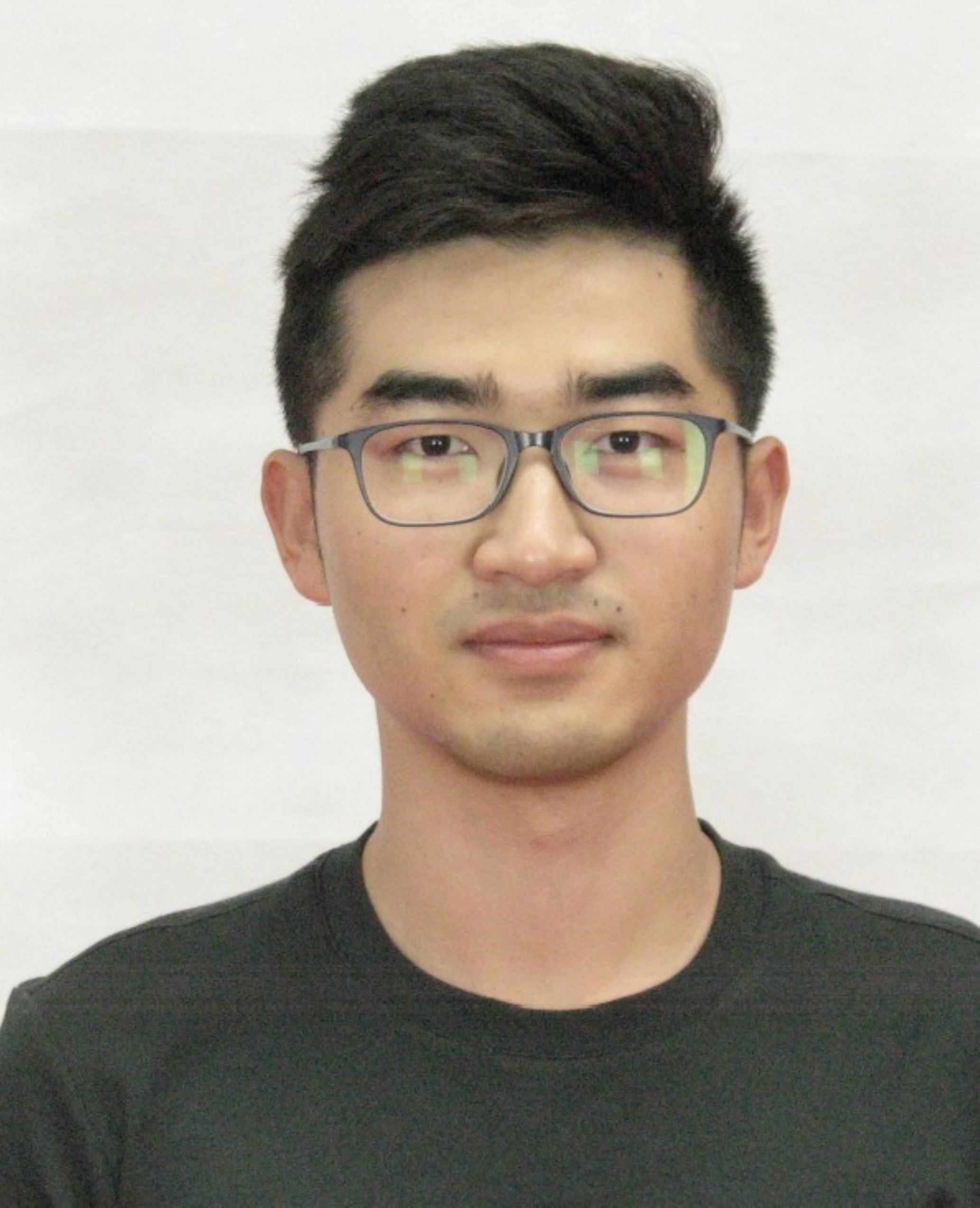}}]{Peng Yang}
	received the B.Sc. degree in electrical engineering and the B.Sc. degree in mathematics from Tsinghua University, Beijing, China, in 2017 and 2018, respectively. He is currently pursuing the Ph.D. degree in electrical engineering at Tsinghua University, Beijing, China. His research interests include power system stability analysis and control. He was the winner of 2020 Zhang Si-Ying (CCDC) Outstanding Young Paper Award. He received the Best Paper Award of IEEE Transactions on Power Systems in 2020.
\end{IEEEbiography}

\begin{IEEEbiography}[{\includegraphics[width=1in,height=1.25in,clip,keepaspectratio]{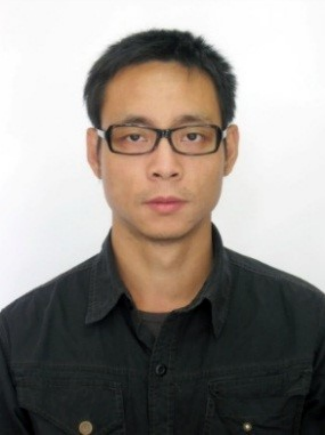}}]{Feng Liu}	
	(M'10, SM'18) received the B.Sc. and Ph.D. degrees in electrical engineering from Tsinghua University, Beijing, China, in 1999 and 2004, respectively. 
	
	Dr. Liu is currently an Associate Professor of Tsinghua University. From 2015 to 2016, he was a visiting associate at California Institute of Technology, CA, USA. His research interests include power system stability analysis, optimal control and robust dispatch, game theory and learning theory and their applications to smart grids. He is the author/coauthor of more than 00 peer-reviewed technical papers and three books, and holds more than 20 issued/pending patents. He is an associated editor of several international journals including IEEE Transactions on Smart Grid and Control Engineering Practice. He also served as a guest editor of IEEE Transactions on Energy Conversion.
\end{IEEEbiography}

\begin{IEEEbiography}[{\includegraphics[width=1in,height=1.25in,clip,keepaspectratio]{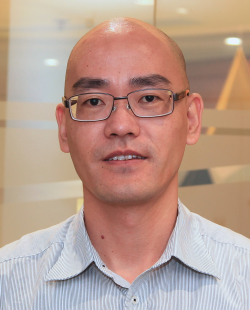}}]{Tao Liu} (M'13) received his B.E. degree from	Northeastern University, China, in 2003 and PhD degree from the Australian National University (ANU), Australia, in 2011. From 2012 to 2015, he worked as a Post-doctoral Fellow at ANU, University of Groningen, and University of Hong Kong (HKU). He became a Research Assistant Professor at HKU in 2015 and now is an Assistant Professor. His	research interests include power system analysis and	control, complex dynamical networks, distributed	control, and event-triggered control.
\end{IEEEbiography}

\begin{IEEEbiography}[{\includegraphics[width=1in,height=1.25in,clip,keepaspectratio]{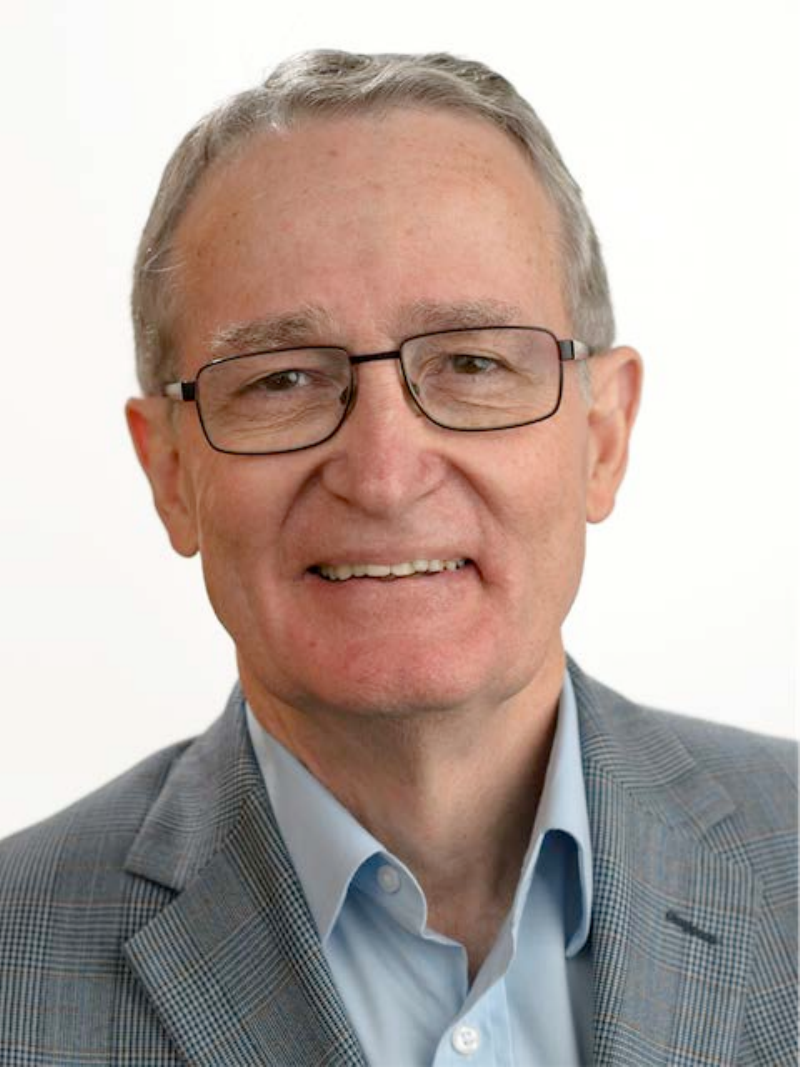}}]{David J. Hill} (S’72-M’76-SM’91-F’93-LF’14) received the PhD degree in Electrical Engineering from the University of Newcastle, Australia, in 1976. From 2021, he is a Professor in the School of Electrical Engineering and Telecommunications, The University of New South Wales, Sydney, Australia (UNSW Sydney) and a Program Leader in The Reliable Affordable Clean Energy for 2030 Cooperative Research Centre (RACE for 2030 CRC) where resources of about A\$350million have been obtained from industry and Government. He is also Professor Emeritus at The University of Sydney and The University of Hong Kong.
	
	During 2013-2020, he held the positions of Chair of Electrical Engineering and Director of the Centre for Electrical Energy Systems in the Department of Electrical and Electronic Engineering at the University of Hong Kong. He previously held positions at the University of Sydney including the Chair of Electrical Engineering during 1994-2002 and again in 2010-2013 along with an Australian Research Council Professorial Fellowship. He was Foundation Director of the Centre for Future Energy Networks during 2010-2018 and part-time Professor 2013-2020. During 2005-2010, he was an ARC Federation Fellow at the Australian National University. He has also held academic and substantial visiting positions at the universities of Melbourne, California (Berkeley), Newcastle (Australia), Lund (Sweden), Munich and in Hong Kong (City and Polytechnic Universities). He holds several honorary positions in Australia, Hong Kong and China. He is also a consultant in the area of power and energy issues in Australia and internationally.
	
	His research activities have been in energy systems, control systems, complex networks, learning systems and stability analysis. His work is now mainly focussed on issues for future energy and power and energy networks with the aim to bring science to accelerate the clean energy transition.
	
	Professor Hill is a Fellow of the Society for Industrial and Applied Mathematics, USA, the International Federation of Automatic Control, the Australian Academy of Science, the Australian Academy of Technological Sciences and Engineering and the Hong Kong Academy of Engineering Sciences. He is also a Foreign Member of the Royal Swedish Academy of Engineering Sciences. He received the 2021 IEEE PES Prabha S.Kundur Power System Dynamics and Control Award.

\end{IEEEbiography}
\end{document}